\documentclass[onecolumn, 12pt]{IEEETran}
\linespread{2}

\usepackage{graphicx}
\usepackage{cite}
\usepackage{amsmath, amsfonts, amssymb}
\usepackage{subfigure, amsthm}
\usepackage{comment}
\usepackage{array}

\usepackage[dvipsnames]{xcolor}

\newcommand{\PR}[1]{\mathrm{Pr}[#1]}
\newcommand{\NN}{\nonumber}
\newcommand{\NNL}{\nonumber\\}
\newcommand{\RBob}{\mathcal{R}_{\textrm{Bob}}}
\newcommand{\REve}{\mathcal{C}_{\textrm{Eve}}}
\newcommand{\DoF}[1]{\lim_{P\to\infty}\frac{#1}{\log_2 P}}
\newcommand{\tDoF}[1]{\underset{P\to\infty}{\lim}\left(#1 / \log_2 P\right)}
\newcommand{\Span}[1]{\mathrm{span}(#1)}
\newcommand{\Chord}[1]{d_c(#1)}
\newcommand{\Chords}[1]{d_c^2(#1)}

\newcommand{\Grassmann}[1]{\mathcal{G}_{#1}(\mathbb{C})}
\newcommand{\Avr}[1]{\mathbb{E}\left[#1\right]}



\newcommand{\JHcomment}[1]{}

\newtheorem{theorem}{Theorem}
\newtheorem{lemma}{Lemma}
\newtheorem{remark}{Remark}

\theoremstyle{definition}

\title{Multiuser Diversity for Secrecy Communications Using Opportunistic Jammer Selection  --
     Secure DoF and Jammer Scaling Law}
\author{\IEEEauthorblockN{Jung~Hoon~Lee},~\IEEEmembership{Student Member,~IEEE},
     and \IEEEauthorblockN{Wan~Choi},~\IEEEmembership{Senior Member,~IEEE}
\thanks{A part of this paper has been presented in \textit{ IEEE Global Commun. Conf. (Globecom)}, Anaheim, CA,
Dec. 2012.}
\thanks{J.~H.~Lee and W.~Choi are with Department of Electrical
    Engineering, Korea Advanced Institute of Science and Technology
    (KAIST), Daejeon 305-701, Korea (e-mail: tantheta@kaist.ac.kr
    wchoi@ee.kaist.ac.kr). }
}



%

\begin{document}
\maketitle

\begin{abstract}
In this paper, we propose opportunistic jammer selection in a
wireless security system for increasing the secure degrees of
freedom (DoF) between a transmitter and a legitimate receiver (say,
Alice and Bob). There is a jammer group consisting of $S$ jammers
among which Bob selects $K$ jammers. The selected jammers transmit
independent and identically distributed Gaussian signals to hinder
the eavesdropper (Eve).
Since the channels of Bob and Eve are independent, we can select the
jammers whose jamming channels are aligned at Bob, but not at Eve.
As a result, Eve cannot obtain any DoF unless it has more than
$KN_j$ receive antennas, where $N_j$ is the number of jammer's
transmit antenna each, and hence $KN_j$ can be regarded as
defensible dimensions against Eve. For the jamming signal alignment
at Bob, we propose two opportunistic jammer selection schemes and
find the scaling law of the required number of jammers for target
secure DoF by a geometrical interpretation of the received signals.

\end{abstract}

\begin{IEEEkeywords}
Physical layer security, secure DoF, multiuser diversity,
opportunistic jammer selection
\end{IEEEkeywords}

\clearpage

\section{Introduction}

Wireless communication systems are vulnerable to eavesdropping
because transmitted signals are broadcasted over the air. Privacy
and security start being treated as important issues.
%
%
Traditionally, security in wireless communications has mainly been
addressed in upper layers and focused on computational security such
as cryptography.
Recently, information-theoretic security in physical layer has
received great attentions because it enables perfect secrecy without
the aid of encryption/decryption keys.


A typical model of a security communication system consists of three
nodes -- a transmitter (Alice), a legitimate receiver (Bob), and a
passive eavesdropper (Eve). Information theoretic security study was
opened by Shannon \cite{S1949} with the notion of perfect secrecy.
The wiretap channel model was first introduced by Wyner \cite{W1975}
and \emph{secrecy capacity} is defined in \cite{CK1978} as the
maximum achievable rate of Bob preventing Eve from obtaining any
information.
Positive secrecy capacity was shown to be achieved when Bob's
channel is less noisy than Eve's.
For guaranteeing positive secrecy rate, artificial noise was
additionally transmitted from the transmitter \cite{GN2008}.
%
%
%
The authors in \cite{ZMT2010} showed that the artificial noise can
enlarge the secrecy rate region in a symmetric interference channel.
The secrecy issues were also studied in cooperative relay systems
\cite{VAG2009, CZS2011, DLG2011}.
%
%
In \cite{CZS2011, DLG2011}, joint selection of relays and jammers
was studied and opportunistic jamming and relay chatting was
proposed.

As an alternative measure to secrecy capacity, secure degrees of
freedom (DoF) have popularly been investigated.
%
%
In \cite{GJ2008}, the secure DoF was found in an X network. For
$N$-user Gaussian interference channels, an interference channel
with confidential message and an interference channel with an
external eavesdropper were studied in \cite{KGL2008, KGL2011},
respectively. The secure DoFs for those cases were shown to be
$\frac{N-2}{2N-2}$ and $\frac{N-2}{2N}$, respectively, while a half
DoF per orthogonal dimension is to be achievable via the
\emph{interference alignment} (IA) scheme \cite{JS2008, CJ2008} in
the absence of the secrecy constraint.

The multiuser diversity, exploiting multiuser dimensions by serving
the selected users with good channel conditions, can enhance the
performance of wireless communication systems \cite{VTL2002, ZD2007,
SH2005, HR2013, HC2011, CA2008}. In many cases, the multiuser
diversity asymptotically achieves the optimal performance with
considerably reduced channel feedback overhead.
Recently, the opportunistic interference alignment (OIA) has been
proposed by the authors in order to resolve the difficulties of IA
implementation \cite{LC2011_3, LC2013, LC2013_2}. In the OIA scheme,
user dimensions are exploited to align interfering channels; each
transmitter has multiple users and selects a single user having the
most aligned interfering channels. The OIA scheme does not require
the global channel state information (CSI) and large computational
complexity for precoding/postprocessing design. A bit surprisingly,
the achieved DoF by OIA was shown to be higher than that of
conventional IA thanks to multiuser dimensions. This result
motivates new applications using the concept of OIA for security
communication systems.


In this paper, we propose opportunistic jammer selection (OJS)
schemes for secure DoF. The basic idea of OJS is to align jamming
signals at Bob's receiver via jammer selection, while the jamming
signals are not aligned at Eve. There is a jammer group consisting
of $S$ jammers, and Bob selects $K(\ge2)$ jammers among them whose
jamming signals are most aligned.
Since we use jamming subspaces formed by jammers, the selected
jammers simply use independent and identically distributed (i.i.d)
Gaussian signals independent of Eve's CSI as well as Alice's.
Denoting the numbers of antennas at Alice, Bob, and Eve by $N_t$,
$N_r$, and $N_e$, respectively, we focus on the cases that
$N_t+N_j\le N_r < N_t + KN_j$ because these cases require the
jamming signal alignment at Bob's receiver.

Although the jamming signals from the selected jammers can be
aligned at Bob's receiver, they are randomly given to Eve and hence
span $KN_j$-dimensional subspace at Eve's receiver. Because Eve
requires more than $KN_j$ receive antennas to obtain non-zero
DoF\footnote{Eve's DoF is defined by $\tDoF{\REve}$ where $\REve$ is
Eve's channel capacity.}, we can regard $KN_j$ as \emph{defensible
dimensions} of the system.
For any number of Eve's receive antennas, we can make Eve's DoF zero
by increasing either the number of selected jammers $K$ or the
number of each jammer's antennas $N_j$.
On the contrary, the jamming signals from the selected jammers can
be aligned in $N_j$-dimensional subspace at Bob's receiver if the
number of jammers goes to infinity. In this case, Bob needs only
$N_t+N_j$ receive antennas to obtain DoF of $N_t$ while Eve needs
$N_t + KN_j$ receive antennas.

Although both OIA and OJS exploit multiuser (or multiple jammer)
dimensions to align interfering (or jamming) signals, the problems
are quite different. In OIA, each user in $N$-transmitter IC has
$N-1$ interfering channels jointly determined for each channel
realization. Thus, the OIA problem is to measure how much the $N-1$
interfering channels are aligned at the selected user when each user
has $N-1$ random interfering channels.
However, because Bob chooses $K$ jammers among total $S$ jammers in
OJS, the problem is changed to how much we can align the $K$ jamming
signals by directly picking each jamming signal among $S$ jamming
signals.


There have been many studies on utilizing jammers for increasing
secure DoF \cite{HY2008, BU2012, XU2012}.
In \cite{HY2008}, cooperative jamming with structured jamming
signals based on lattice coding was proposed for a Gaussian wiretap
channel. In multiple access fading channels, \cite{BU2012} proposed
two ergodic alignment schemes for secrecy communications -- scaling
based alignment (SBA) and ergodic secret alignment (ESA). In SBA,
users repeatedly transmit their symbols with proper scaling over
several consecutive time slots. Then, the signals are aligned at
eavesdropper's receiver and hence Eve's DoF becomes zero. In ESA,
the concept of ergodic IA \cite{NGJV2009} was extended to secrecy
communications. In \cite{XU2012}, cooperative jammers were exploited
to increase secure DoF in a Gaussian wiretap channel, where
transmitter and cooperative jammers send jointly designed signals
according to channel conditions.
Contrary to the previous works, in our proposed OJS schemes, the
selected jammers adopt i.i.d Gaussian signals oblivious to CSI.
Furthermore, any precoding design at Alice is not required for the
proposed scheme. Because the role of CSI is only for constructing
the wiretap code, the proposed OJS scheme does not suffer from
secure DoF degradation even with only Bob's CSI, which is a
practical advantage of the proposed OJS scheme.


Our contributions are summarized below.
\begin{itemize}
\item  Using a geometrical interpretation of jamming signals,
we define an \emph{alignment measure} representing how well the
selected jamming signals are aligned at Bob's receiver and quantify
the achievable alignment measure via jammer selection for the total
number of jammers.

\item To obtain the secure DoF, we propose two jammer selection
schemes -- the minimum DoF loss jammer selection and the
subspace-based jammer selection. Using the proposed jammer selection
schemes, we show that Bob can achieve the secure DoF of $d \in
(0,N_t]$ when the number of jammers is scaled by $S \propto
P^{dN_j/2}$ where $P$ is transmit power.


\item We show that the same secure DoF is achievable with secrecy
outage probability $\epsilon$ by the proposed OJS with only Bob's
CSI.

\end{itemize}


The rest of the paper is organized as follows.
Section II presents the system model. The achievable secure DoF via
opportunistic jammer selection is analyzed in Section III. We
geometrically interpret the jamming channels in Section IV, and
propose opportunistic jammer selection schemes in Section V. In
Section VI, we derive the scaling laws of the required number of
jammers for target secure DoF. Numerical results are presented in
Section VII, followed by concluding remarks in Section VIII.

\section{System Model}

Our system model is depicted in Fig. \ref{fig:system_model}. Alice
wants to send secret messages to Bob, and a passive eavesdropper,
Eve, overhears the secret messages.
The numbers of antennas at Alice, Bob, and Eve are denoted by $N_t$,
$N_r$, and $N_e$, respectively.
To prevent Eve's eavesdropping, there is a jammer group consisting
of $S$ jammers with $N_j$ antennas each, and Bob selects $K~(K\ge2)$
jammers among them.
The selected jammers simultaneously transmit i.i.d. Gaussian signals
independent of Alice's message so that the jamming signals interfere
with Bob as well as Eve.

We assume that Alice and the selected jammers fully utilize their
antenna dimensions, i.e., Alice and each selected jammer transmit
$N_t$ and $N_j$ streams, respectively.
We also assume that Eve has more antennas than each jammer, i.e.,
$N_e>N_j$, and Bob selects $K$ jammers such that $K N_j \ge N_e$. We
consider the cases that Bob has $N_r$ receive antennas such that
$N_t+N_j \le N_r < N_t + KN_j$.
If Bob has a less number of receive antennas than the total number
of antennas at Alice and a jammer, i.e., $N_r<N_t + N_j$, Bob cannot
obtain DoF of $N_t$; otherwise, if the number of Bob's antennas is
larger than or equal to the total number of Alice and all selected
jammers' antennas, i.e., $N_r\ge N_t+KN_j$, Bob can easily achieve
DoF of $N_t$ with zero-forcing like schemes.

Since Bob knows the jamming channel from each jammer, we assume that
Bob selects $K$ jammers in the jammer group with only its own CSI.
It is also assume that Eve has its own CSI from the selected
jammers, which is independent of Bob's CSI.
For wiretap code construction at Alice, we firstly assume that Alice
knows Bob's achievable rate and Eve's channel capacity after Bob's
jammer selection.
Then, in Section \ref{section:partial_CSI} we show that the required
jammer scaling for the target secure DoF is the same even in
practical scenarios that Alice has no information about Eve.

In this paper, we adopt the quasi-static fading channel model
\cite{BB2011}, where the coherent interval is longer than the jammer
selection procedure and the length of a codeword. That is, the
channel coefficients remain constant over the transmission of an
entire codeword but each codeword suffers from an independent
channel.

Let $s_1, \ldots, s_K$ be the indices of the $K$ selected jammers in
the jammer group. Then, the received signal of Bob, $\mathbf{y} \in
\mathbb{C} ^{N_r\times 1}$, is given by
\begin{align}
 \mathbf{y} = \mathbf{H}_0 \mathbf{x}_0 +
    \sum_{k=1}^K \mathbf{H}_{s_k} \mathbf{x}_{s_k}
    + \mathbf{n}, \label{eqn:y_Bob}
\end{align}
where $\mathbf{H}_0 \in \mathbb{C}^{N_r\times N_t}$ is the channel
matrix from Alice to Bob, and $\mathbf{H}_{s_k} \in \mathbb{C}^{N_r
\times N_j}$ is the channel matrix from the $s_k$th jammer to Bob.
We assume that all elements of the channel matrices are i.i.d.
complex Gaussian random variables with zero mean and unit variance.
The vectors $\mathbf{x}_0 \in \mathbb{C}^{N_t\times 1}$ and
$\mathbf{x}_{s_k} \in \mathbb{C}^{N_j\times 1}$ are the transmit
signal from Alice and the $s_k$th jammer, respectively, satisfying
$\mathbb{E} [\mathbf{x}_0 \mathbf{x}_0^ \dagger] = (P/N_t)
\mathbf{I}_{N_t}$ and $\mathbb{E} [\mathbf{x}_{s_k}
\mathbf{x}_{s_k}^\dagger] = (P/N_j) \mathbf{I}_{N_j}$, where
$(\cdot)^\dagger$ denotes conjugate transpose, $P$ is total transmit
power budget at each node, and $\mathbf{I}_{N_t}$ is an $N_t\times
N_t$ identity matrix.
$\mathbf{n}\in \mathbb{C}^{N_r\times 1}$ is a circularly symmetric
complex Gaussian noise vector such that $\mathbf{n} \sim
\mathcal{CN}(0, \mathbf{I}_{N_r})$.

From \eqref{eqn:y_Bob}, the capacity of Bob becomes
\begin{align}
 \mathcal{C}_{\textrm{Bob}} \triangleq \log_2 \left\vert
 \mathbf{I}_{N_r} +
     \frac{P}{N_t}\mathbf{H}_0 \mathbf{H}_0 ^\dagger
    \left(\mathbf{I}_{N_r} + \frac{P}{N_j}
        \sum_{k=1}^K \mathbf{H}_{s_k}\mathbf{H}_{s_k} ^\dagger
    \right)^{-1} \right\vert. \NN
\end{align}
In this paper, we assume that Bob uses the postprocessing matrix
$\mathbf{V}^\dagger \in \mathbb{C} ^{N_t \times N_r}$ such that
$\mathbf{V} ^\dagger \mathbf{V} = \mathbf{I}_{N_t}$ to suppress the
jamming signals. After postprocessing, the received signal at Bob is
given by
\begin{align}
 \mathbf{V}^\dagger\mathbf{y} = \mathbf{V}^\dagger\mathbf{H}_0 \mathbf{x}_0 +
 \sum_{k=1}^K \mathbf{V}^\dagger\mathbf{H}_{s_k} \mathbf{x}_{s_k} +
 \mathbf{V}^\dagger\mathbf{n},\NN
\end{align}
and the achievable rate of Bob becomes
\begin{align}
 \RBob &\triangleq \log_2
    \frac{\left\vert\mathbf{I}_{N_t} + \mathbf{V}^\dagger
    \left( \frac{P}{N_t}\mathbf{H}_0 \mathbf{H}_0 ^\dagger
    + \sum_{k=1}^K \frac{P}{N_j}\mathbf{H}_{s_k} \mathbf{H}_{s_k} ^\dagger
    \right)\mathbf{V}\right\vert}
    {\left\vert\mathbf{I}_{N_t} + \frac{P}{N_j} \sum_{k=1}^K \mathbf{V}^\dagger \mathbf{H}_{s_k}
        \mathbf{H}_{s_k} ^\dagger \mathbf{V}\right\vert}.\NN
\end{align}

On the other hand, the received signal at Eve denoted by
$\bar{\mathbf{y}}\in\mathbb{C}^{N_e\times 1}$ becomes
\begin{align}
 \bar{\mathbf{y}} = \mathbf{G}_0 \mathbf{x}_0 +
 \sum_{k=1}^K \mathbf{G}_{s_k} \mathbf{x}_{s_k} +
 \bar{\mathbf{n}},\NN
\end{align}
where $\mathbf{G}_0 \in \mathbb{C}^{N_e\times N_t}$ and
$\mathbf{G}_{s_k} \in \mathbb{C}^{N_e\times N_j}$
are the channel matrices from Alice and from the selected jammer
$s_k$, respectively, whose elements are i.i.d. complex Gaussian
random variables with zero mean and unit variance.
Also, $\bar{\mathbf{n}}\in \mathbb{C}^{N_e\times 1}$ is a circularly
symmetric complex Gaussian noise vector such that $\bar{\mathbf{n}}
\sim \mathcal{CN}(0, \mathbf{I}_{N_e})$.
Thus, the channel capacity of Eve is given by
\begin{align}
 \REve &\triangleq \log_2 \left\vert
 \mathbf{I}_{N_e} +
     \frac{P}{N_t}\mathbf{G}_0 \mathbf{G}_0 ^\dagger
    \left(\mathbf{I}_{N_e} + \frac{P}{N_j}
        \sum_{k=1}^K \mathbf{G}_{s_k}\mathbf{G}_{s_k} ^\dagger
    \right)^{-1} \right\vert \label{eqn:R_Eve}.
\end{align}
Therefore, the secrecy rate $[\RBob-\REve]^+$ is achievable at Bob
for each channel realization through the Wyner's encoding scheme
\cite{LPS2009, W1975} with nested code structure, where $[\cdot]^+$
denotes $\max(\cdot, 0)$. In an average sense, we obtain the secrecy
rate and the secure DoF given, respectively, by
\begin{align}
 &\textrm{Secrecy rate} = \mathbb{E}\left\{[\RBob - \REve]^+\right\}, \NNL
 &\textrm{Secure DoF} = \mathbb{E}\left\{\DoF{[\RBob - \REve]^+}\right\}.
 \label{eqn:SDoF}
\end{align}

\section{Achievable Secure DoF via Opportunistic Jammer Selection}
\subsection{The Concept of Opportunistic Jammer Selection}

The purpose of the opportunistic jammer selection is to obtain the
secure DoF between Alice and Bob.
To hinder Eve's eavesdropping, Bob selects $K$ jammers among $S$
jammers in the jammer group. Since the jamming signals also
interfere with Bob, appropriate jammers should be selected from the
jammer group.
Using the IA concept, the subspace spanned by multiple
$N_j$-dimensional signals can be reduced minimally in the
$N_j$-dimensional subspace. It was also shown in \cite{LC2011_3,
LC2013, LC2013_2} that interference alignment can be achieved by
opportunistic user selection if the number of users goes to
infinity.

Since each jammer has $N_j$ transmit antennas, each jamming signal
spans $N_j$-dimensional subspace in $\mathbb{C}^{N_r}$ at Bob's
receiver. Thus, opportunistically selected jamming signals can be
aligned in $N_j$-dimensional subspace if the selection pool size $S$
goes to infinity.
If the jamming signals are aligned in $(N_r-N_t)$-dimensional
subspace at Bob's receiver, Bob can use the residual $N_t$
dimensions for Alice's signals. The concept of the opportunistic
jammer selection is illustrated in Fig. \ref{fig:OJS_concept}. In
the jammer group, Bob selects $K$ jammers whose channels are most
aligned. When there are an infinite number of jammers (i.e.,
$S=\infty$), the jamming signals can be perfectly aligned in
$N_j$-dimensional subspace at Bob's receiver by proper jammer
selection. In this case, $N_t+N_j$ antennas are enough for Bob to
achieve DoF of $N_t$.
At Eve's receiver, on the other hand, the jamming signals are not
aligned and span $K N_j$-dimensional subspace. This is because the
jammer selection of Bob is independent of Eve so that it corresponds
to a random jammer selection to Eve. Since $KN_j$-dimensional
subspace at Eve's receiver is corrupted by the jamming signals from
the selected $K$ jammers, $KN_j$ can be interpreted as
\emph{defensible dimensions} of the security system against
eavesdropping. If the number of Eve's receive antenna is less than
$KN_j$, Eve cannot achieve any DoF; Eve needs $N_t+KN_j$ receive
antennas for DoF of $N_t$.
Thus, the jamming system is designed according to the target
defensible dimensions.
Since Eve has $N_e$ antennas, we require the defensible DoF larger
than or equal to $N_e$, i.e., $K N_j\ge N_e$ to yield the zero DoF
at Eve.

\subsection{Secure DoF via Opportunistic Jammer Selection}
\label{section:secure_DoF}

In this section, we find the achievable secure DoF via OJS.
We recall the channel capacity of Eve given in \eqref{eqn:R_Eve}:
\begin{align}
 \REve \triangleq \log_2 \left\vert
 \mathbf{I}_{N_e} +
     \frac{P}{N_t}\mathbf{G}_0 \mathbf{G}_0 ^\dagger
    \left(\mathbf{I}_{N_e} + \frac{P}{N_j}
        \sum_{k=1}^K \mathbf{G}_{s_k}\mathbf{G}_{s_k} ^\dagger
    \right)^{-1} \right\vert\NN.
\end{align}
Then, it can be decomposed into $ \REve^{+}$ and $\REve^{-}$ such
that $\REve= \REve^{+} - \REve^{-}$ given by
\begin{align}
 \REve^+&\triangleq
    \log_2 \left\vert
    \mathbf{I}_{N_e} +
     \frac{P}{N_t} \mathbf{G}_0 \mathbf{G}_0 ^\dagger
    + \frac{P}{N_j}\sum_{k=1}^K \mathbf{G}_{s_k} \mathbf{G}_{s_k} ^\dagger
    \right\vert \NNL
 \REve^-&\triangleq
    \log_2 \left\vert
    \mathbf{I}_{N_e} + \frac{P}{N_j} \sum_{k=1}^K
    \mathbf{G}_{s_k} \mathbf{G}_{s_k} ^\dagger
    \right\vert.\NN
\end{align}
In the above equations, the matrix $\sum_{k=1}^K
\mathbf{G}_{s_k}\mathbf{G}_{s_k} ^\dagger ( = [\mathbf{G}_{s_1},
\ldots, \mathbf{G}_{s_K}][\mathbf{G}_{s_1}, \ldots,
\mathbf{G}_{s_K}]^\dagger)$ becomes an $N_e\times N_e$ Wishart
matrix with $KN_j ~(\ge N_e)$ DoF, and hence it has $N_e$ non-zero
eigenvalues with probability one.
From this fact, we can easily show that Eve's DoF becomes zero for
each channel realization such that
\begin{align}
 \DoF{\REve} = \DoF{\REve^+ - \REve^-}
 =N_e-N_e=0.
 \label{eqn:DoF_Eve}
\end{align}
This implicates that the achievable DoF at Bob directly becomes the
secure DoF.

To find the secure DoF at Bob, we decompose the average achievable
rate at Bob, i.e., $\Avr{\RBob}$, into $\RBob^+$ and $\RBob^{-}$
such that $\Avr{\RBob} = \RBob^{+} - \RBob^{-}$, which are given,
respectively, by
\begin{align}
 \RBob^+&=
    \mathbb{E}\log_2 \left\vert
    \mathbf{I}_{N_t} + \mathbf{V}^\dagger
    \left( \frac{P}{N_t}\mathbf{H}_0 \mathbf{H}_0 ^\dagger
    + \frac{P}{N_j}\sum_{k=1}^K \mathbf{H}_{s_k} \mathbf{H}_{s_k} ^\dagger
    \right)\mathbf{V}
    \right\vert \NNL
 \RBob^-&=
    \mathbb{E}\log_2 \left\vert
    \mathbf{I}_{N_t} + \frac{P}{N_j}
    \sum_{k=1}^K \mathbf{V}^\dagger \mathbf{H}_{s_k} \mathbf{H}_{s_k} ^\dagger \mathbf{V}
    \right\vert.\NN
\end{align}
Then, similar to \eqref{eqn:DoF_Eve}, the achievable DoF of Bob
becomes
\begin{align}
 \mathbb{E}\left\{\DoF{\RBob}\right\}
 &= \DoF{\RBob^+ - \RBob^-} \NNL
 &= N_t - \DoF{\RBob^-}.
 \label{eqn:DoF_Bob}
\end{align}

In \eqref{eqn:DoF_Eve}, we show that Eve's DoF becomes zero for each
channel realization. By plugging \eqref{eqn:DoF_Eve} and
\eqref{eqn:DoF_Bob} in \eqref{eqn:SDoF}, we obtain the achievable
secure DoF via opportunistic jammer selection:
\begin{align}
 \boxed{\textrm{Secure DoF}= N_t - \DoF{\RBob^-}  .}
 \label{eqn:DoF_secure}
\end{align}
Therefore, the achievable secure DoF depends on how much the DoF
loss from jamming signals is reduced at Bob.
In the latter part of the paper, we set the target secure DoF $d\in
(0, N_t]$ and find the number of required jammers to obtain the
target secure DoF.

\section{Geometric Interpretations}

\subsection{Geometric Interpretations of Jamming Channels}
The Grassmann manifold $\Grassmann{N_r, N_j}$ is set of all
$N_j$-dimensional subspaces in $\mathbb{C}^{N_r}$
\cite{DLR2008, BN2002, H2005}.
Since each jammer and Bob have $N_j$ and $N_r$ antennas,
respectively, the channel matrix from each jammer to Bob constructs
an $N_j$-dimensional subspace in $\mathbb{C}^{N_r}$. Let
$\mathsf{H}_k$ be the subspace formed by the channel matrix from the
$k$th jammer to Bob, i.e., $\mathbf{H}_k$. Then, it belongs to the
Grassmann manifold $\Grassmann{N_r, N_j}$, i.e., $\mathsf{H}_k \in
\Grassmann{N_r, N_j}$.
Each subspace can be represented by the \emph{generator matrix}
whose columns are orthonormal and span the same subspace.
If we denote the generator matrix of $\mathsf{H}_k$ by
$\tilde{\mathbf{H}}_k\in\mathbb{C}^{N_r\times N_j}$, it is satisfied
that $\tilde{\mathbf{H}} _k^\dagger \tilde{\mathbf{H}}_k =
\mathbf{I}_{N_j}$ and $\Span{\tilde{\mathbf{H}}_k} =
\Span{\mathbf{H}_k} = \mathsf{H}_k$.
Since Bob has more antennas than Alice, Bob can partly suppress the
jamming signals using the residual antenna dimensions.
In our case, the $N_j$-dimensional jamming signals should be aligned
in $(N_r - N_t)$-dimensional subspace to obtain the secure DoF of
$N_t$.

The distance between two subspaces can be defined in many ways. The
\emph{chordal distance} is one of the most widely used ones.
Let $\mathsf{Q}\in\Grassmann{N_r, N_r - N_t}$ be an arbitrary $(N_r
- N_t)$-dimensional subspace and $\mathbf{Q} \in \mathbb{C}^{N_r
\times (N_r - N_t)}$ be its generator matrix such that
$\mathbf{Q}^\dagger\mathbf{Q} = \mathbf{I}_{N_r-N_t}$.
Then, the squared chordal distance between $\mathsf{H}_k \in
\Grassmann{N_r, N_j}$ and $\mathsf{Q} \in \Grassmann{N_r, N_r-N_t}$
denoted by $\Chords{\mathsf{H}_k, \mathsf{Q}}$ is calculated from
the generator matrices of them such that
\begin{align}
 \Chords{\mathsf{H}_k, \mathsf{Q}}
 &\triangleq \min(N_j, N_r-N_t)
    -tr(\tilde{\mathbf{H}}_k \tilde{\mathbf{H}}_k^\dagger
    \mathbf{Q}\mathbf{Q}^\dagger)\NNL
 &\stackrel{(a)}{=} N_j -tr(\tilde{\mathbf{H}}_k^\dagger
    \mathbf{Q}\mathbf{Q}^\dagger \tilde{\mathbf{H}}_k  ),
    \label{eqn:CD1}
\end{align}
where the equality $(a)$ holds because $N_r \ge N_t + N_j$ and
$tr(\mathbf{AB}) = tr(\mathbf{BA})$. Note that $\tilde{\mathbf{H}}_k
\tilde{\mathbf{H}}_k^\dagger$ and $\mathbf{Q} \mathbf{Q}^\dagger$
are the projection matrices onto the subspaces $\mathsf{H}_k$ and
$\mathsf{Q}$, respectively. See \cite{BB2007} for more details on
the chordal distance.

\begin{lemma}\label{lemma:CD2}
The squared chordal distance between $\mathsf{H}_k$ and $\mathsf{Q}$
is equivalent to
\begin{align}
 \Chords{\mathsf{H}_k, \mathsf{Q}}
 &= tr((\mathbf{Q}^\perp)^\dagger\tilde{\mathbf{H}}_k
    \tilde{\mathbf{H}}_k^\dagger(\mathbf{Q}^\perp)),
    \label{eqn:CD2}
\end{align}
where $\mathbf{Q}^{\perp}\in\mathbb{C}^{N_r\times N_t}$ is the
generator matrix of the orthogonal complement subspace of
$\mathsf{Q}$ denoted by  $\mathsf{Q}^{\perp} \in \Grassmann{N_r,
N_t}$.
\end{lemma}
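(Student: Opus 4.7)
The plan is to use the orthogonal decomposition of the identity induced by $\mathbf{Q}$ and $\mathbf{Q}^\perp$, then invoke the cyclic property of the trace to convert one form into the other. The starting point is formula \eqref{eqn:CD1}, which expresses $\Chords{\mathsf{H}_k, \mathsf{Q}}$ as $N_j - \mathrm{tr}(\tilde{\mathbf{H}}_k^\dagger \mathbf{Q}\mathbf{Q}^\dagger \tilde{\mathbf{H}}_k)$; the goal is to recognize the $N_j$ as the trace of a projector that splits cleanly into parallel and perpendicular pieces.

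First I would note that, since $[\mathbf{Q},\mathbf{Q}^\perp]\in\mathbb{C}^{N_r\times N_r}$ is a unitary matrix (its columns form an orthonormal basis of $\mathbb{C}^{N_r}$), the projection matrices onto $\mathsf{Q}$ and $\mathsf{Q}^\perp$ sum to the identity:
\begin{align}
\mathbf{Q}\mathbf{Q}^\dagger + \mathbf{Q}^\perp(\mathbf{Q}^\perp)^\dagger = \mathbf{I}_{N_r}. \NN
\end{align}
Next, I would rewrite the constant $N_j$ by observing that $\tilde{\mathbf{H}}_k^\dagger\tilde{\mathbf{H}}_k=\mathbf{I}_{N_j}$, so $N_j = \mathrm{tr}(\tilde{\mathbf{H}}_k^\dagger\tilde{\mathbf{H}}_k) = \mathrm{tr}\bigl(\tilde{\mathbf{H}}_k^\dagger(\mathbf{Q}\mathbf{Q}^\dagger+\mathbf{Q}^\perp(\mathbf{Q}^\perp)^\dagger)\tilde{\mathbf{H}}_k\bigr)$.

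Substituting this back into \eqref{eqn:CD1}, the $\mathbf{Q}\mathbf{Q}^\dagger$ term cancels and leaves $\Chords{\mathsf{H}_k, \mathsf{Q}} = \mathrm{tr}\bigl(\tilde{\mathbf{H}}_k^\dagger\mathbf{Q}^\perp(\mathbf{Q}^\perp)^\dagger\tilde{\mathbf{H}}_k\bigr)$. One final application of the cyclic property of the trace, $\mathrm{tr}(\mathbf{A}\mathbf{B})=\mathrm{tr}(\mathbf{B}\mathbf{A})$ with $\mathbf{A}=\tilde{\mathbf{H}}_k^\dagger\mathbf{Q}^\perp$ and $\mathbf{B}=(\mathbf{Q}^\perp)^\dagger\tilde{\mathbf{H}}_k$, yields exactly \eqref{eqn:CD2}.

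There is essentially no obstacle here; the argument is a two-line linear-algebra identity. The only thing to verify carefully is that the completeness relation $\mathbf{Q}\mathbf{Q}^\dagger+\mathbf{Q}^\perp(\mathbf{Q}^\perp)^\dagger=\mathbf{I}_{N_r}$ holds, which requires $\mathbf{Q}^\perp$ to be the generator matrix of the orthogonal complement of $\mathsf{Q}$ (so its $N_t$ columns together with the $N_r-N_t$ columns of $\mathbf{Q}$ form an orthonormal basis of $\mathbb{C}^{N_r}$). This is built into the hypothesis on $\mathbf{Q}^\perp$, so the lemma follows immediately.
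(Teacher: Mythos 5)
Your proof is correct and follows essentially the same route as the paper's: both rely on the completeness relation coming from the unitary matrix $[\mathbf{Q},\mathbf{Q}^\perp]$, the orthonormality $\tilde{\mathbf{H}}_k^\dagger\tilde{\mathbf{H}}_k=\mathbf{I}_{N_j}$, and the cyclic property of the trace, differing only in presentation (you rewrite $N_j$ as a trace and cancel, whereas the paper shows the two projected traces sum to $N_j$).
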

\begin{proof}
To prove the equivalence between \eqref{eqn:CD1} and
\eqref{eqn:CD2}, it is enough to show $ tr(\mathbf{Q}^\dagger
\tilde{\mathbf{H}}_k \tilde{\mathbf{H}}_k^\dagger\mathbf{Q}) +
tr((\mathbf{Q}^\perp)^\dagger \tilde{\mathbf{H}}_k
\tilde{\mathbf{H}}_k^\dagger (\mathbf{Q}^\perp)) = N_j$.
Since the concatenated matrix $[\mathbf{Q}, \mathbf{Q}^\perp]$ is an
unitary matrix such that
$$[\mathbf{Q}, \mathbf{Q}^\perp]
[\mathbf{Q}, \mathbf{Q}^\perp]^\dagger= [\mathbf{Q},
\mathbf{Q}^\perp]^\dagger [\mathbf{Q}, \mathbf{Q}^\perp] =
\mathbf{I}_{N_r},$$
it is satisfied that
\begin{align}
 &tr(\mathbf{Q}^\dagger \tilde{\mathbf{H}}_k
    \tilde{\mathbf{H}}_k^\dagger\mathbf{Q})
    +tr((\mathbf{Q}^\perp)^\dagger \tilde{\mathbf{H}}_k
    \tilde{\mathbf{H}}_k^\dagger (\mathbf{Q}^\perp))\NNL
 &\qquad=tr([\mathbf{Q}, \mathbf{Q}^\perp]^\dagger
    \tilde{\mathbf{H}}_k \tilde{\mathbf{H}}_k^\dagger
    [\mathbf{Q}, \mathbf{Q}^\perp])
 \stackrel{(a)}{=}tr(\tilde{\mathbf{H}}_k^\dagger [\mathbf{Q}, \mathbf{Q}^\perp]
    [\mathbf{Q}, \mathbf{Q}^\perp]^\dagger \tilde{\mathbf{H}}_k)
 = N_j,\NN
\end{align}
where the equality $(a)$ holds from $tr(\mathbf{A}\mathbf{B}) =
tr(\mathbf{B}\mathbf{A})$.
\end{proof}

\begin{lemma}\label{lemma:unique_subspace}
Any full rank precoding matrix at each jammer cannot change the
jamming subspace at Bob's receiver. For example, when the $k$th
jammer uses an arbitrary precoding matrix $\mathbf{U} \in\mathbb{C}
^{N_j\times N_j}$ of rank $N_j$, it is satisfied that
$\Span{\mathbf{H}_k} = \Span{\mathbf{H}_k \mathbf{U}}=\mathsf{H}_k$.
%
%
\end{lemma}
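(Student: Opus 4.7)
The statement is a standard linear-algebra fact about column spaces, so the plan is a short double-inclusion argument that turns entirely on the invertibility of $\mathbf{U}$.

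First I would recall that $\Span{\mathbf{H}_k}$ is by definition the column space of $\mathbf{H}_k$, i.e., $\{\mathbf{H}_k \mathbf{a} : \mathbf{a} \in \mathbb{C}^{N_j}\}$, and similarly $\Span{\mathbf{H}_k \mathbf{U}} = \{\mathbf{H}_k \mathbf{U}\mathbf{a} : \mathbf{a} \in \mathbb{C}^{N_j}\}$. For the forward inclusion $\Span{\mathbf{H}_k \mathbf{U}} \subseteq \Span{\mathbf{H}_k}$, any $\mathbf{v} = \mathbf{H}_k \mathbf{U} \mathbf{a}$ can be rewritten as $\mathbf{H}_k \mathbf{b}$ with $\mathbf{b} = \mathbf{U}\mathbf{a} \in \mathbb{C}^{N_j}$, so $\mathbf{v} \in \Span{\mathbf{H}_k}$. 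For the reverse inclusion, I would invoke the hypothesis that $\mathbf{U}$ has rank $N_j$ and is therefore invertible; then for any $\mathbf{v} = \mathbf{H}_k \mathbf{b}$ the choice $\mathbf{a} = \mathbf{U}^{-1}\mathbf{b}$ gives $\mathbf{v} = \mathbf{H}_k \mathbf{U} \mathbf{a} \in \Span{\mathbf{H}_k \mathbf{U}}$. Combining the two inclusions yields $\Span{\mathbf{H}_k} = \Span{\mathbf{H}_k \mathbf{U}} = \mathsf{H}_k$, as claimed.

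Since the whole argument rests on a single invertibility step, there is no substantive obstacle. The only subtlety worth flagging is that the conclusion fails if $\mathbf{U}$ is rectangular or rank-deficient, in which case $\Span{\mathbf{H}_k\mathbf{U}}$ can only shrink to a proper subspace of $\mathsf{H}_k$. This is precisely why the full-rank hypothesis on the precoding matrix is stated, and it justifies the operational reading of the lemma: any jammer that fully exploits its $N_j$ transmit-antenna dimensions leaves its jamming subspace at Bob invariant, so the geometric analysis developed in terms of $\mathsf{H}_k \in \Grassmann{N_r, N_j}$ is intrinsic to the jammer and does not depend on the specific precoder it chooses.
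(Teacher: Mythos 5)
Your proof is correct and follows essentially the same elementary linear-algebra route as the paper, which simply notes that $\mathbf{H}_k\mathbf{U}$ has rank $N_j$ and that its columns are linear combinations of the columns of $\mathbf{H}_k$, hence the spans coincide. Your double-inclusion argument via $\mathbf{U}^{-1}$ just spells out explicitly what the paper leaves as a sketch, so there is nothing to fix.
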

\begin{proof}
Since both $\mathbf{H}_k$ and $\mathbf{U}$ are the matrices of rank
$N_j$, so is $\mathbf{H}_k\mathbf{U}$.
Intuitively, the columns of $\mathbf{H}_k \mathbf{U}$ are linear
combinations of the columns of $\mathbf{H}_k$ so that they will span
the same subspace. Formally, we can show $\Chords{\mathbf{H}_k,
\mathbf{H}_k\mathbf{U}}=0$, but it is trivial.
\end{proof}

As stated in Lemma \ref{lemma:unique_subspace}, each jamming channel
forms an unique subspace invariant to any full rank precoding
matrix.
Since there are total $S$ jammers, we have $S$ jamming subspaces
$\mathsf{H}_1, \ldots, \mathsf{H}_S$ such that $\{\mathsf{H}_1,
\ldots, \mathsf{H}_S\} \subset \Grassmann{N_r,N_j}$. Note that each
jamming subspace is isotropic in $\mathbb{C}^{N_r}$ and independent
of each other because all jamming channels have the elements of
i.i.d. complex Gaussian random variables.
A graphical interpretation is given in Fig.
\ref{fig:jamming_subspaces}.
In Fig. \ref{fig:jamming_subspaces}, each point on the sphere
surface is an $N_j$-dimensional subspace represented by its
generator matrix, while the surface of the sphere is an
$N_r$-dimensional space, i.e., $\mathbb{C}^{N_r}$, which is set of
all $N_j$-dimensional subspaces.
Since we should select $K$ jammers among $S$ jammers, we select $K$
points out of $S$ points (jamming subspaces) on the sphere.

\subsection{Alignment Measure among $K$ Jamming Subspaces}

Now, our aim is to select $K$ jammers whose jamming subspaces are
most aligned in an $(N_r-N_t)$-dimensional subspace.
For jammer selection, we define an \emph{alignment measure} to
quantify how well $K$ jamming subspaces are aligned and find the
relationship between the alignment measure and jammer group size
(i.e., $S$). The alignment measure is based on the mini-max distance
of the selected jamming subspaces from an $(N_r-N_t)$-dimensional
subspace.
For example, the alignment measure of the $K$ subspaces formed by
$K$ jammers indexed from $1$ to $K$, i.e., $\mathsf{H}_1, \ldots,
\mathsf{H}_K$, is defined by
\begin{align}
 \mathfrak{q}(\mathsf{H}_1, \ldots, \mathsf{H}_K)
 \triangleq \min_{\mathsf{Q}\in \Grassmann{N_r,N_r-N_t}}
 \max_{k\in[K]}~\Chord{\mathsf{H}_k, \mathsf{Q}},
 \label{eqn:alignment_measure}
\end{align}
where $[K]=\{1, \ldots, K\}$.
If $\mathfrak{q}(\mathsf{H}_1, \ldots, \mathsf{H}_K)=0$, there
exists $\mathsf{Q}\in \Grassmann{N_r,N_r-N_t}$ such that
\begin{align}
 \Chord{\mathsf{H}_1, \mathsf{Q}} = \Chord{\mathsf{H}_2, \mathsf{Q}}
 =\cdots =\Chord{\mathsf{H}_K, \mathsf{Q}}=0,\NN
\end{align}
which means that the $K$ jamming subspaces are perfectly aligned in
an $(N_r-N_t)$-dimensional subspace.
Otherwise, if $\mathfrak{q}(\mathsf{H}_1, \ldots, \mathsf{H}_K) =
\delta$, there exist $\mathsf{Q}\in \Grassmann{N_r, N_r - N_t}$ such
that
\begin{align}
 \Chord{\mathsf{H}_k, \mathsf{Q}}\le \delta \qquad \forall k\in[K],\NN
\end{align}
which means the $K$ subspaces are aligned in an
$(N_r-N_t)$-dimensional subspace within distance $\delta$.
Since there are $S$ jammers in our problem, we can select $K$
jammers (i.e., jamming subspaces) with the smallest alignment
measure among $S$ jammers.
Thus, the alignment measure for the selected jammers becomes
\begin{align}
 \min_{\{s_1', \ldots, s_K'\} \subset [S]}
    \mathfrak{q}(\mathsf{H}_{s_1'}, \ldots, \mathsf{H}_{s_K'}).\NN
\end{align}
%
%
%
As the number of jammers increases, we can select more-aligned
jamming subspaces. Therefore, the question of interest is how small
we can make the alignment measure of the $K$ jamming subspaces by
opportunistic jammer selection for a given number of total jammers
$S$.
To answer this question, we adopt subspace quantization theory
\cite{DLR2008, BN2002, H2005, ACKL2005, MH2008}.

Suppose that we quantize an arbitrary $N_j$-dimensional subspace
into one of $M$ $(N_r-N_t)$-dimensional subspaces. We define a
subspace codebook $\mathcal{Q} \triangleq \{\mathsf{Q}_1, \ldots,
\mathsf{Q}_M\}$ comprised of $M$ $(N_r-N_t)$-dimensional subspaces
such that $\vert\mathcal{Q}\vert=M$ and $\mathcal{Q}\subset
\Grassmann{N_r, N_r-N_t}$.
For the $m$th subspace (or codeword), i.e., $\mathsf{Q}_m$, we
define a metric ball with radius $\delta$:
\begin{align}
 B(\mathsf{Q}_m, \delta)
 \triangleq \{\mathsf{P} \in \Grassmann{N_r, N_j} :
    \Chord{\mathsf{Q}_m,\mathsf{P}} \le \delta\},\NN
\end{align}
as a set of $N_j$-dimensional subspaces within a distance $\delta$
from the $(N_r-N_j)$-dimensional subspace $\mathsf{Q}_m$.

Generally, the performance of a codebook is measured by two
important parameters -- the \emph{packing radius} and the
\emph{covering radius}.
The packing radius of $\mathcal{Q}$ denoted by
$\delta_p(\mathcal{Q})$ is the maximum radius of each metric ball
which is non-overlapped such that \cite{DLR2008, BN2002, H2005}
\begin{align}
 \delta_p(\mathcal{Q}) \triangleq \max \{\delta :
    B(\mathsf{Q}_i, \delta) \cap B(\mathsf{Q}_j, \delta)
    =\emptyset \quad \forall   i\ne j \}.\NN
\end{align}
The covering radius of $\mathcal{Q}$ denoted by $\delta_c
(\mathcal{Q})$ is the minimum radius of the metric ball covering
whole space such that \cite{ACKL2005, MH2008}
\begin{align}
 \delta_c(\mathcal{Q}) \triangleq \min \{\delta :
    B(\mathsf{Q}_1, \delta)\cup\cdots \cup B(\mathsf{Q}_M, \delta)
    = \mathbb{C}^{N_r}\}.
    \label{eqn:covering_radius}
\end{align}
A graphical representation of the packing and the covering radii of
a codebook are given in Fig. \ref{fig:packing_radius} and Fig.
\ref{fig:covering_radius}, respectively.
Obviously, the covering radius is always larger than or equal to the
packing radius, i.e., $\delta_p(\mathcal{Q}) \le
\delta_c(\mathcal{Q})$. Since the union of $M$ metric balls with the
covering radius $\delta_c(\mathcal{Q})$ is $\mathbb{C}^{N_r}$, any
jamming subspace will be contained at least one of the $M$ metric
balls with the covering radius. This fact leads to Remark
\ref{remark:pigeion_hole}.
%
\begin{remark}[Pigeon hole principle]\label{remark:pigeion_hole}
When there are $S=(K-1)M+1$ jamming subspaces, at least $K$
subspaces are contained in the same metric ball among
$B(\mathsf{Q}_1, \delta_c(\mathcal{Q})), \ldots, B(\mathsf{Q}_M,
\delta_c(\mathcal{Q}))$.
The illustration is given in Fig. \ref{fig:jammer_covering}.
\end{remark}
%

Now, we consider two optimal codebooks of size $M$; one maximizes
the packing radius, and the other minimizes the covering radius.
Although the optimal codebooks are not unique, the maximum packing
radius and the minimum covering radius will be uniquely determined
for given codebook size.
With a slight abuse of notation, we denote the maximum packing
radius and the minimum covering radius obtained from the codebooks
of size $M$ by $\delta_p^\star(M)$ and $\delta_c^\star(M)$,
respectively, such that
\begin{align}
 \delta_p^\star(M)
 &=\max_{\mathcal{Q} \subset \Grassmann{N_r, N_r-N_t} \atop \vert\mathcal{Q}\vert=M}
    \delta_p(\mathcal{Q})\NNL
 \delta_c^\star(M)
 &=\min_{\mathcal{Q} \subset \Grassmann{N_r, N_r-N_t} \atop \vert\mathcal{Q}\vert=M}
    \delta_c(\mathcal{Q}).\NN
\end{align}
%


Unfortunately, however, the exact values of the optimal packing and
covering radii are unknown because finding them are NP-hard
problems.
Instead, the optimal packing radius $\delta_p^\star(M)$ is shown to
be \cite{DLR2008}
\begin{align}
 \kappa_1 M^{-\frac{1}{N_t N_j}} (1+o(1))
 \le \delta_p^\star(M)
 \le \kappa_2 M^{-\frac{1}{N_t N_j}} (1+o(1)),
 \label{eqn:packing_result}
\end{align}
where $\kappa_1$ and $\kappa_2$ are constants invariant to $M$ (see
\cite{DLR2008} for details).\footnote{$f(n) = o(g(n))$ means that
$f(n)\le cg(n)$ for any $c>0$ when $n$ is sufficiently large.}
It is also shown that the main terms of both upper and lower bounds
are quite accurate estimates of the optimal packing radius when $M$
is sufficiently large.

Obviously, the optimal covering radius is larger than the optimal
packing radius. It can be easily proved by contradiction that the
optimal covering radius is smaller than the twice of the optimal
packing radius.
So we can establish the following relationship:
\begin{align}
 \delta_p^\star(M) \le \delta_c^\star(M) \le 2\delta_p^\star(M).
 \label{eqn:packing_covering}
\end{align}
From \eqref{eqn:packing_result} and \eqref{eqn:packing_covering}, we
obtain the range of the optimal covering radius in Remark
\ref{remark:covering_bound}.

\begin{remark}\label{remark:covering_bound}
Using codebook of size $M$, the optimal covering radius satisfies
that
\begin{align}
 \kappa_1 M^{-\frac{1}{N_t N_j}} (1+o(1))
 \le \delta_c^\star(M)
 \le 2\kappa_2 M^{-\frac{1}{N_t N_j}} (1+o(1)).\NN
\end{align}
When the codebook size is sufficiently large (i.e., when $M$ is
sufficiently large), the optimal covering radius is scaled by the
codebook size $M$ such that
\begin{align}
 \delta_c^\star(M) \propto M^{-\frac{1}{N_t N_j}}.\NN
\end{align}
\end{remark}

The optimal covering radius is closely related to our problem. As
stated in Remark \ref{remark:pigeion_hole}, if there are
$S=(K-1)M+1$ jammers, we can ensure that there exist $K$ jamming
channels aligned in an $(N_r-N_t)$-dimensional subspace within
distance $\delta_c^\star(M)$.
%
%
Therefore, we establish the following key lemma.

\begin{lemma}\label{lemma:AM_bound}
When there are $S=(K-1)M+1$ jammers, we can always pick $K$ jammers
whose alignment measure is smaller than $2\kappa_2 M^{-\frac{1}{N_t
N_j}} (1+o(1))$, i.e.,
\begin{align}
 \min_{\{s_1',\ldots, s_K'\}\subset [S]}
    \mathfrak{q}(\mathsf{H}_{s_1'}, \ldots, \mathsf{H}_{s_K'})
    &\le 2\kappa_2 M^{-\frac{1}{N_t N_j}} (1+o(1)),\NN
\end{align}
where $\kappa_2$ are constants invariant to $M$, and the term
$(1+o(1))$ can be ignored for large $M$.
\end{lemma}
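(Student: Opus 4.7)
The plan is to chain together the two preceding remarks: Remark \ref{remark:pigeion_hole} (the pigeon-hole statement on metric balls) and Remark \ref{remark:covering_bound} (the scaling of the optimal covering radius). The content of Lemma \ref{lemma:AM_bound} is essentially that the best-of-$S$ alignment measure is bounded by the optimal covering radius of a size-$M$ codebook.

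First, I would fix a codebook $\mathcal{Q}^\star=\{\mathsf{Q}_1,\ldots,\mathsf{Q}_M\}\subset\Grassmann{N_r,N_r-N_t}$ that achieves the optimal covering radius $\delta_c^\star(M)$. Because the metric balls $B(\mathsf{Q}_1,\delta_c^\star(M)),\ldots,B(\mathsf{Q}_M,\delta_c^\star(M))$ cover every $N_j$-dimensional subspace in $\mathbb{C}^{N_r}$, every jamming subspace $\mathsf{H}_k$ lies in at least one of them. Assigning each of the $S=(K-1)M+1$ jammers to such a containing ball and applying the pigeon-hole principle (exactly Remark \ref{remark:pigeion_hole}) gives an index $m^\star\in[M]$ and indices $s_1',\ldots,s_K'\in[S]$ such that $\Chord{\mathsf{H}_{s_k'},\mathsf{Q}_{m^\star}}\le \delta_c^\star(M)$ for every $k\in[K]$.

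Next, I would translate this containment back into the alignment measure defined in \eqref{eqn:alignment_measure}. Since $\mathsf{Q}_{m^\star}$ is an admissible choice in the inner minimization over $\Grassmann{N_r,N_r-N_t}$, we have
\begin{align}
\min_{\{s_1',\ldots,s_K'\}\subset[S]} \mathfrak{q}(\mathsf{H}_{s_1'},\ldots,\mathsf{H}_{s_K'})
\le \mathfrak{q}(\mathsf{H}_{s_1'},\ldots,\mathsf{H}_{s_K'})
\le \max_{k\in[K]} \Chord{\mathsf{H}_{s_k'},\mathsf{Q}_{m^\star}}
\le \delta_c^\star(M). \NN
\end{align}
Plugging the upper bound from Remark \ref{remark:covering_bound}, namely $\delta_c^\star(M)\le 2\kappa_2 M^{-1/(N_t N_j)}(1+o(1))$, then yields the claimed inequality.

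I do not expect a real obstacle: the lemma is essentially a bookkeeping step that packages the pigeon-hole argument with the covering-radius scaling. The only subtlety worth stating clearly is that the alignment measure is a min-max over \emph{all} $(N_r-N_t)$-dimensional subspaces, so exhibiting the single codeword $\mathsf{Q}_{m^\star}$ that is simultaneously within chordal distance $\delta_c^\star(M)$ of the chosen $K$ jamming subspaces is sufficient to upper-bound the inner maximum, and the outer minimum over $K$-subsets of $[S]$ is then bounded by the specific subset produced by the pigeon-hole argument.
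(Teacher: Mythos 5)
Your proposal is correct and follows essentially the same route as the paper's own proof: fix a covering-radius-optimal codebook, invoke the pigeon-hole remark to find $K$ jamming subspaces in one metric ball, bound the alignment measure by the chordal distance to that ball's center, and then apply the covering-radius scaling of Remark \ref{remark:covering_bound}. No gaps to report.
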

\begin{proof}
Let $\mathcal{Q}_c^\star=\{\mathsf{Q}_1^\star, \ldots,
\mathsf{Q}_M^\star\}$ be the codebook that minimizes the covering
radius, and consider $M$ metric balls $B(\mathsf{Q}_1^\star,
\delta_c^\star(M)), \ldots, B(\mathsf{Q}_M^\star,
\delta_c^\star(M))$. By the definition, the union of the metric
balls becomes $\mathbb{C}^{N_r}$.
%
%
As stated in Remark \ref{remark:pigeion_hole}, when there are
$S=(K-1)M+1$ jammers, we can always pick $K$ jamming subspaces in
the same metric ball.
Let $\mathsf{H}_{(1)},\ldots, \mathsf{H}_{(K)}$ be the $K$ jamming
subspaces in the same metric ball, and $\mathsf{Q}_m^\star$ be the
center of the metric ball.
Then, we obtain
\begin{align}
\min_{\{s_1',\ldots, s_K'\}\subset [S]}
    \mathfrak{q}(\mathsf{H}_{s_1'}, \ldots, \mathsf{H}_{s_K'})
 &\le \mathfrak{q}(\mathsf{H}_{(1)},\ldots, \mathsf{H}_{(K)})\NNL
 &\stackrel{(a)}{\le} \max_{k\in[K]} \Chord{\mathsf{H}_{(k)}, \mathsf{Q}_m^\star}\NNL
 &\stackrel{(b)}{\le} \delta_c^\star(M)\NNL
 &\stackrel{(c)}{\le} 2\kappa_2 M^{-\frac{1}{N_t N_j}} (1+o(1)),\NN
\end{align}
where the inequality $(a)$ is from the definition of the alignment
measure given in \eqref{eqn:alignment_measure}, and the inequality
$(b)$ holds because $\mathsf{H}_{(k)}\in B(\mathsf{Q}_m^\star,
\delta_c^\star(M))$ for all $k \in [K]$. Also, the inequality $(c)$
holds from Remark \ref{remark:covering_bound}.

\end{proof}



\section{Opportunistic Jammer Selection for Secure DoF}

We recall the secure DoF of Bob given in \eqref{eqn:DoF_secure}:
\begin{align}
 \textrm{Secure DoF}= N_t - \DoF{\RBob^-}.
 \label{eqn:DoF_Bob2}
\end{align}
We need an appropriate jammer selection scheme to reduce the DoF
loss from the jamming signals.
In this section, we propose two opportunistic jammer selection
schemes.
Firstly, we find the minimum DoF loss jammer selection scheme to
achieve the maximum secure DoF. Then, we propose the subspace-based
jammer selection scheme.

\subsection{Minimum DoF Loss Jammer Selection Scheme (OJS1)}
In the minimum DoF loss jammer selection scheme, Bob directly
minimizes the rate loss from the jamming signals, i.e., $\RBob^-$ in
\eqref{eqn:DoF_Bob2}. Correspondingly, the DoF loss is minimized.
The rate loss from the minimum DoF loss jammer selection at Bob
denoted by $\RBob^{- (1)}$ is given by
\begin{align}
 \RBob^{- (1)}
 &=\mathbb{E}\left[\min_{s_1',\ldots, s_K', \mathbf{V}}
    \log_2 \left\vert \mathbf{I}_{N_t} + \frac{P}{N_j} \mathbf{V}^\dagger
    \bigg(
        \sum_{k=1}^K \mathbf{H}_{s_k'} \mathbf{H}_{s_k'} ^\dagger
    \bigg)\mathbf{V}
    \right\vert\right]\NNL
 &\stackrel{(a)}{=}\mathbb{E}\left[
    \min_{s_1',\ldots, s_K'}\log_2
    \prod_{n=N_r-N_t+1}^{N_r} \bigg[1 + \frac{P}{N_j}
    \lambda_n \bigg(
    \sum_{k=1}^K \mathbf{H}_{s_k'} \mathbf{H}_{s_k'} ^\dagger
    \bigg) \bigg]\right],
    \label{eqn:R_Bob1}
\end{align}
where the equality $(a)$ is obtained using the postprocessing matrix
\begin{align}
 \mathbf{V}=\left[\mathbf{v}_{N_r - N_t + 1}
    (\mathbf{A}), \ldots, \mathbf{v}_{N_r}(\mathbf{A})\right],\NN
\end{align}
where $\mathbf{A}= \sum_{k=1}^K \mathbf{H}_{s_k'} \mathbf{H}_{s_k'}
^\dagger$, and $\lambda_n(\cdot)$ and $\mathbf{v}_n(\cdot)$ are the
$n$th largest eigenvalue and corresponding eigenvector of the
matrix, respectively. Thus, the selected jammers at Bob become
\begin{align}
 (s_1,\ldots, s_K)
 = \underset{s_1',\ldots, s_K'}{\arg\min}
    \prod_{n=N_r-N_t+1}^{N_r} \left[1
    + \frac{P}{N_j} \lambda_n \left(
    \sum_{k=1}^K \mathbf{H}_{s_k'} \mathbf{H}_{s_k'}^\dagger
    \right) \right].\NN
\end{align}

\subsection{Subspace-based Jammer Selection Scheme (OJS2)}
In this subsection, we propose the suboptimal jammer selection
scheme using the jamming subspaces.
First of all, we find an upper bound of the minimum rate loss of
Bob, i.e., \eqref{eqn:R_Bob1}, given by
\begin{align}
 \RBob^{-(1)}
 &= \mathbb{E}_{\tilde{\mathbf{H}}, \mathbf{\Lambda}}
    \left[\min_{s_1',\ldots, s_K', \mathbf{V}}
    \log_2 \left\vert \mathbf{I}_{N_t} + \frac{P}{N_j}
    \sum_{k=1}^K \mathbf{V}^\dagger\tilde{\mathbf{H}}_{s_k'}
        \mathbf{\Lambda}_{s_k'} \tilde{\mathbf{H}}_{s_k'} ^\dagger
        \mathbf{V}
    \right\vert\right]\NNL
 &\stackrel{(a)}{\le}
    \mathbb{E}_{\tilde{\mathbf{H}}}
    \left[\min_{s_1',\ldots, s_K', \mathbf{V}}
    \mathbb{E}_{\mathbf{\Lambda}}
    \log_2 \left\vert \mathbf{I}_{N_t} + \frac{P}{N_j}
    \sum_{k=1}^K \mathbf{V}^\dagger \tilde{\mathbf{H}}_{s_k'}
        \mathbf{\Lambda}_{s_k'} \tilde{\mathbf{H}}_{s_k'} ^\dagger
        \mathbf{V} \right\vert\right]\NNL
 &\stackrel{(b)}{\le}
    \mathbb{E}_{\tilde{\mathbf{H}}}
    \left[\min_{s_1',\ldots, s_K', \mathbf{V}}
    \log_2 \left\vert \mathbf{I}_{N_t} + \frac{P}{N_j}
        \sum_{k=1}^K \mathbf{V}^\dagger \tilde{\mathbf{H}}_{s_k'}
        \mathbb{E}_{\mathbf{\Lambda}}[\mathbf{\Lambda}_{s_k'}]
        \tilde{\mathbf{H}}_{s_k'} ^\dagger \mathbf{V}
    \right\vert\right]\NNL
 &= \mathbb{E}_{\tilde{\mathbf{H}}}
    \left[\min_{s_1',\ldots, s_K', \mathbf{V}}
    \log_2 \left\vert \mathbf{I}_{N_t} + P
        \sum_{k=1}^K \mathbf{V}^\dagger \tilde{\mathbf{H}}_{s_k'}
        \tilde{\mathbf{H}}_{s_k'} ^\dagger\mathbf{V}
    \right\vert\right],\label{eqn:R_Bob_upper1}
\end{align}
where $\tilde{\mathbf{H}}\in\mathbb{C}^{N_r\times N_j}$ and
$\mathbf{\Lambda}\in \mathbb{R}^{N_j\times N_j}$ are obtained from
$\mathbf{H}\in\mathbb{C}^{N_r\times N_j}$ by compact singular value
decomposition of $\mathbf{H}\mathbf{H}^\dagger$ (i.e.,
$\mathbf{H}\mathbf{H}^\dagger = \tilde{\mathbf{H}}\mathbf{\Lambda}
\tilde{\mathbf{H}}^\dagger$) where $\mathbf{\Lambda}$ is a diagonal
matrix whose diagonal elements are the \emph{unordered} non-zero
eigenvalues of $\mathbf{H}\mathbf{H}^\dagger$. Then,
$\tilde{\mathbf{H}}$ becomes the generator matrix of $\mathbf{H}$.
Note that the generator matrix $\tilde{\mathbf{H}}$ for the matrix
$\mathbf{H}$ is not unique, but any generator matrix yields the same
$\tilde{\mathbf{H}} \tilde{\mathbf{H}}^\dagger$.
The inequality $(a)$ is from the fact that the minimum of the
averages is larger than the average of the minimums.
The inequality $(b)$ holds from Jensen's inequality and the
independence between $\tilde{\mathbf{H}}$ and $\mathbf{\Lambda}$.
Finally, we obtain \eqref{eqn:R_Bob_upper1} from the fact that
$\mathbb{E}[\mathbf{\Lambda}_{s_k'}]=N_j\mathbf{I}_{N_j}$
\cite{RJ2008}.

In the subspace-based jammer selection scheme, we minimize the upper
bound given in \eqref{eqn:R_Bob_upper1} denoted by $\RBob^{-(2)}$
instead of \eqref{eqn:R_Bob1}.
We can rewrite $\RBob^{-(2)}$ as
\begin{align}
 \RBob^{-(2)}
 &\triangleq\mathbb{E}_{\tilde{\mathbf{H}}}
    \left[\min_{s_1',\ldots, s_K', \mathbf{V}}
    \log_2 \left\vert \mathbf{I}_{N_t} + P
        \sum_{k=1}^K \mathbf{V}^\dagger \tilde{\mathbf{H}}_{s_k'}
        \tilde{\mathbf{H}}_{s_k'} ^\dagger\mathbf{V}
    \right\vert\right]\NNL
 &= \mathbb{E}_{\tilde{\mathbf{H}}}\left[
    \min_{s_1',\ldots, s_K'}\log_2
    \prod_{n=N_r-N_t+1}^{N_r} \bigg[1 + P
    \lambda_n \bigg(
    \sum_{k=1}^K \tilde{\mathbf{H}}_{s_k'} \tilde{\mathbf{H}}_{s_k'} ^\dagger
    \bigg) \bigg]\right]
    \label{eqn:R_Bob_upper2}
\end{align}
by applying $\mathbf{V}=\left[\mathbf{v}_{N_r-N_t+1} (\mathbf{B}),
\ldots, \mathbf{v}_{N_r}(\mathbf{B})\right]$ in
\eqref{eqn:R_Bob_upper1} where $\mathbf{B}=\sum_{k=1}^K
\tilde{\mathbf{H}}_{s_k'} \tilde{\mathbf{H}}_{s_k'} ^\dagger$.
Thus, the selected jammers are given by
\begin{align}
 (s_1,\ldots, s_K)
 = \underset{s_1',\ldots, s_K'}{\arg\min}
    \prod_{n=N_r-N_t+1}^{N_r} \left[1
    + P \lambda_n \left(
    \sum_{k=1}^K \tilde{\mathbf{H}}_{s_k'} \tilde{\mathbf{H}}_{s_k'}^\dagger
    \right) \right].\NN
\end{align}

OJS1 considers the channel matrix itself so that the jammer
selection criterion involves the channel magnitude. As a result,
OJS1 becomes the secure DoF optimal jammer selection scheme
minimizing the Bob's DoF loss. On the other hand, OJS2 considers the
subspace spanned by the channel matrix so that all channel matrices
spanning the same subspace are considered identical regardless of
the channel magnitude.

\section{Sufficient Jammer Scaling for Target Secure DoF}

In this section, we find a sufficient jammer scaling law for a
target secure DoF.
In previous section, we proposed two jammer selection schemes which
obtain the minimum rate loss $\RBob^{-(1)}$ and its upper bound
$\RBob^{-(2)}$, respectively.
From \eqref{eqn:DoF_secure}, the target secure DoF $d \in (0, N_t]$
is obtained in both schemes when
\begin{align}
 \DoF{\RBob^{-(1)}}=\DoF{\RBob^{-(2)}}=N_t - d.\NN
\end{align}
Since it is hard to directly find the required jammer scaling for
the target secure DoF, we will find $\RBob^{-(3)}$ as a further
upper bound of $\RBob^{-(1)}$ and find the sufficient jammer scaling
law for the target secure DoF $d$. This scaling ensures that
\begin{align}
 \DoF{\RBob^{-(1)}}\le \DoF{\RBob^{-(2)}}\le \DoF{\RBob^{-(3)}}
 = N_t - d.
 \label{eqn:R_Bob123}
\end{align}

The term $\RBob^{-(2)}$ given in \eqref{eqn:R_Bob_upper2} is upper
bounded as follows
\begin{align}
 \RBob^{-(2)}
 &=\mathbb{E}_{\tilde{\mathbf{H}}}
    \left[\min_{s_1',\ldots, s_K', \mathbf{V}}
    \log_2 \left\vert \mathbf{I}_{N_t} + P
        \sum_{k=1}^K \mathbf{V}^\dagger \tilde{\mathbf{H}}_{s_k'}
        \tilde{\mathbf{H}}_{s_k'} ^\dagger\mathbf{V}
    \right\vert\right] \NNL
 &\stackrel{(a)}{\le}\mathbb{E}_{\tilde{\mathbf{H}}}
    \left[\min_{s_1',\ldots, s_K', \mathbf{V}}
    N_t \log_2
    \left(1 + \frac{P}{N_t}
        \sum_{k=1}^K tr\big(\mathbf{V}^\dagger \tilde{\mathbf{H}}_{s_k'}
        \tilde{\mathbf{H}}_{s_k'} ^\dagger\mathbf{V}\big)
    \right)\right] \NNL
 &\stackrel{(b)}{=}\mathbb{E}_{\tilde{\mathbf{H}}}
    \Bigg[\min_{s_1',\ldots, s_K', \atop \mathsf{V}\in\Grassmann{N_r, N_t}}
    N_t \log_2
    \left(1 + \frac{P}{N_t}
        \sum_{k=1}^K
        \Chords{\mathsf{H}_{s_k'}, \mathsf{V}^\perp}
    \right)\Bigg] \NNL
 &\stackrel{(c)}{\le}\mathbb{E}_{\tilde{\mathbf{H}}}
    \Bigg[\min_{s_1',\ldots, s_K'}
    N_t \log_2
    \left(1 + \frac{P}{N_t}
        K \left[\mathfrak{q}(\mathsf{H}_{s_1'}, \ldots, \mathsf{H}_{s_K'})
    \right]^2\right)\Bigg] \NNL
 &\stackrel{(d)}{\le}
    N_t \log_2
    \left(1 + \frac{4\kappa_2^2KP}{N_t}
        \left\lfloor\frac{S-1}{K-1}\right\rfloor^{-\frac{2}{N_t N_j}}
    \right) + o(1), \label{eqn:R_Bob_upper3}
\end{align}
where $\mathsf{V}\in \Grassmann{N_r, N_t}$ is the subspace formed by
the postprocessing matrix $\mathbf{V}$, and $\mathsf{V}^\perp \in
\Grassmann{N_r, N_r - N_t}$ is the orthogonal complement subspace of
$\mathsf{V}$.
In the above equations, the inequality $(a)$ is from Jensen's
inequality such that $\log (\alpha\beta) \le 2\log[(\alpha +
\beta)/2]$, and the equality $(b)$ holds from Lemma \ref{lemma:CD2}.
The inequality $(c)$ holds because selecting the subspace
$\mathsf{V}$ is equivalent to selecting the subspace
$\mathsf{V}^\perp$, and it is satisfied that
\begin{align}
 \min_{\mathsf{V}^\perp \in \Grassmann{N_r,N_r-N_t}}
     \sum_{k=1}^K \Chords{\mathsf{H}_{s_k'}, \mathsf{V}^\perp}
 \le K \left[\mathfrak{q}
    (\mathsf{H}_{s_1'}, \ldots, \mathsf{H}_{s_K'})\right]^2\NN
\end{align}
from the definition of the alignment measure given in
\eqref{eqn:alignment_measure}.
Also, the inequality $(d)$ is from Lemma \ref{lemma:AM_bound} by
substituting $M=\lfloor(S-1)/(K-1)\rfloor$.
We define \eqref{eqn:R_Bob_upper3} as $\RBob^{-(3)}$, i.e.,
\begin{align}
 \RBob^{-(3)} \triangleq  N_t \log_2
 \left(1 + \frac{4\kappa_2^2PK}{N_t}
    \left\lfloor\frac{S-1}{K-1}\right\rfloor^{-\frac{2}{N_t N_j}}
 \right).\NN
\end{align}
%

\begin{theorem} \label{theorem:Delta_loss}
The sufficient number of jammers to ensure Bob's
rate loss smaller than $\Delta$ is given by
\begin{align}
 S = (K-1)\left[\frac{4\kappa_2^2
 KP}{N_t\left(2^{\Delta/N_t}-1\right)}\right]^{\frac{N_tN_j}{2}} + 1.
 \label{eqn:sufficient_jammers}
\end{align}
\end{theorem}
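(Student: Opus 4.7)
The plan is to invoke the chain of upper bounds already established in the paper, namely $\RBob^{-(1)} \le \RBob^{-(2)} \le \RBob^{-(3)}$ from \eqref{eqn:R_Bob123}, and simply solve the inequality $\RBob^{-(3)} \le \Delta$ for the jammer pool size $S$. Since the minimum DoF loss jammer selection achieves the loss $\RBob^{-(1)}$, any $S$ that forces the coarsest upper bound $\RBob^{-(3)}$ below $\Delta$ automatically forces $\RBob^{-(1)}$ below $\Delta$ as well. The intuition is that Lemma \ref{lemma:AM_bound} already did the geometric heavy lifting by relating the alignment measure to $S$ through the covering radius scaling, so this theorem amounts to inverting that relationship.

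Concretely, I would start from
\begin{align}
\RBob^{-(3)} = N_t \log_2 \left(1 + \frac{4\kappa_2^2 PK}{N_t}\left\lfloor\frac{S-1}{K-1}\right\rfloor^{-\frac{2}{N_t N_j}}\right) \NN
\end{align}
and impose $\RBob^{-(3)} \le \Delta$. Dividing by $N_t$ and exponentiating base $2$ yields
\begin{align}
\frac{4\kappa_2^2 PK}{N_t}\left\lfloor\frac{S-1}{K-1}\right\rfloor^{-\frac{2}{N_t N_j}} \le 2^{\Delta/N_t} - 1. \NN
\end{align}
Isolating the floor term and raising both sides to the power $N_t N_j / 2$ gives
\begin{align}
\left\lfloor\frac{S-1}{K-1}\right\rfloor \ge \left[\frac{4\kappa_2^2 KP}{N_t(2^{\Delta/N_t}-1)}\right]^{\frac{N_tN_j}{2}}, \NN
\end{align}
which rearranges directly into the stated formula \eqref{eqn:sufficient_jammers} for the sufficient $S$.

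The only subtlety worth flagging is the $o(1)$ term that was present in the derivation of $\RBob^{-(3)}$ via Remark \ref{remark:covering_bound}; this term is asymptotically negligible, and I would mention that the expression for $S$ in \eqref{eqn:sufficient_jammers} is thus tight for large codebook sizes, i.e., when $S/(K-1)$ is sufficiently large so that the leading-order estimate of the covering radius is accurate. Aside from that, there is no real obstacle: all the geometric machinery (chordal distance, covering radius bound, alignment measure, Jensen's and trace inequalities) was already deployed to build the upper bound $\RBob^{-(3)}$, and the theorem is essentially a cleanly invertible algebraic consequence. The proof will therefore be short, a few lines of algebra after quoting \eqref{eqn:R_Bob123} and the explicit form of $\RBob^{-(3)}$.
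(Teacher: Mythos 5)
Your proposal is correct and follows essentially the same route as the paper: the paper's proof simply sets $\RBob^{-(3)}=\Delta$ and solves for $S$, which is exactly your algebraic inversion of the bound in \eqref{eqn:R_Bob_upper3} combined with the chain \eqref{eqn:R_Bob123}. If anything, your version is slightly more careful than the paper's, since you make the inequality direction explicit and flag the $o(1)$ term inherited from Remark \ref{remark:covering_bound}.
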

\begin{proof}
If $\RBob^{-(3)}=\Delta$, Bob can have rate loss smaller than
$\Delta$. By solving $\RBob^{-(3)}=\Delta$, we obtain
\eqref{eqn:sufficient_jammers}.
\end{proof}

\begin{theorem}\label{theorem:Scaling}
The target secure DoF of $d\in(0, N_t]$ is achieved when the number
of jammers is scaled as
$S \propto P^{dN_j/2}$.%
%
\end{theorem}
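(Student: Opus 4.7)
The plan is to invoke Theorem \ref{theorem:Delta_loss} and the chain of inequalities in \eqref{eqn:R_Bob123}, reducing the claim to a one-line DoF calculation on the explicit upper bound $\RBob^{-(3)}$. Because \eqref{eqn:DoF_secure} gives Secure DoF $= N_t - \DoF{\RBob^{-(1)}}$ and \eqref{eqn:R_Bob123} gives $\DoF{\RBob^{-(1)}} \le \DoF{\RBob^{-(3)}}$, it suffices to exhibit a scaling of $S$ in $P$ under which $\DoF{\RBob^{-(3)}} \le N_t - d$. Everything else — the existence of the postprocessing matrix $\mathbf{V}$, the bound on the alignment measure via the covering radius, and the concentration of the jamming eigenvalue matrix $\mathbb{E}[\mathbf{\Lambda}] = N_j \mathbf{I}_{N_j}$ — has already been absorbed into the derivation of $\RBob^{-(3)}$ upstream.

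The key step is the asymptotic evaluation. Substitute $S = c\, P^{dN_j/2}$ into
\begin{align}
\RBob^{-(3)} = N_t \log_2 \left(1 + \frac{4\kappa_2^2 PK}{N_t}\left\lfloor\frac{S-1}{K-1}\right\rfloor^{-\frac{2}{N_tN_j}}\right). \NN
\end{align}
For large $P$ the floor satisfies $\lfloor (S-1)/(K-1)\rfloor \sim \frac{c}{K-1} P^{dN_j/2}$, so
\begin{align}
\left\lfloor\frac{S-1}{K-1}\right\rfloor^{-\frac{2}{N_tN_j}} \sim \left(\frac{c}{K-1}\right)^{-\frac{2}{N_tN_j}} P^{-\frac{d}{N_t}}. \NN
\end{align}
The argument of the logarithm therefore scales as $1 + \Theta(P^{1-d/N_t})$. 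Since $d \in (0,N_t]$ forces $1 - d/N_t \ge 0$, the dominant term gives
\begin{align}
\RBob^{-(3)} = N_t \cdot \frac{N_t - d}{N_t} \log_2 P + \BigO{1} = (N_t - d)\log_2 P + \BigO{1}, \NN
\end{align}
whence $\DoF{\RBob^{-(3)}} = N_t - d$ exactly.

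Combining this with \eqref{eqn:DoF_secure} and \eqref{eqn:R_Bob123} yields Secure DoF $\ge N_t - \DoF{\RBob^{-(3)}} = d$, proving the achievability claim under the stated scaling. I would also remark that the scaling law can be read off directly from Theorem \ref{theorem:Delta_loss}: plugging the target residual rate $\Delta = (N_t - d)\log_2 P$ into \eqref{eqn:sufficient_jammers} gives $2^{\Delta/N_t} - 1 = P^{(N_t-d)/N_t} - 1 \sim P^{1-d/N_t}$, and raising the bracketed term to the power $N_tN_j/2$ reproduces $S \propto P^{dN_j/2}$ up to constants; this alternative derivation is a nice consistency check but the direct DoF computation above is cleaner.

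No genuine obstacle arises here — the real work was done in establishing the covering-radius bound of Lemma \ref{lemma:AM_bound} and the chain of inequalities $(a)$–$(d)$ leading to $\RBob^{-(3)}$. The only subtleties worth flagging are (i) that this is an achievability statement (an upper bound $\DoF{\RBob^{-(1)}} \le N_t - d$ translates into a lower bound $\ge d$ on Secure DoF, which is the natural direction for a sufficient scaling law), and (ii) that the boundary case $d = N_t$ corresponds to no $P$-growth in the log argument, so the bound on $\RBob^{-(3)}$ tends to a constant and gives zero DoF loss, again consistent with the claim $S \propto P^{N_tN_j/2}$.
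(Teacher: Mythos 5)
Your proposal is correct and follows essentially the same route as the paper: both reduce the claim via \eqref{eqn:DoF_secure} and \eqref{eqn:R_Bob123} to showing $\DoF{\RBob^{-(3)}} = N_t - d$ under $S \propto P^{dN_j/2}$, which the paper asserts with less detail than your explicit asymptotic computation of the floor term and exponent $1 - d/N_t$. Your added consistency check via Theorem \ref{theorem:Delta_loss} and the remark on the boundary case $d = N_t$ are fine but not needed.
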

\begin{proof}
As stated in \eqref{eqn:R_Bob123}, the target DoF of $d$ is achieved
when $\tDoF{\RBob^{-(3)}}=N_t-d$,
equivalently,
\begin{align}
 \lim_{P\to\infty} N_t \log_2
    \left(1 + \frac{4\kappa_2^2KP}{N_t}
        \left\lfloor\frac{S-1}{K-1}\right\rfloor^{-\frac{2}{N_t N_j}}
    \right) = \log_2 P^{N_t - d}.\NN
\end{align}
This condition holds when $S\propto P^{dN_j/2}$.
\end{proof}

Interestingly, the scaling law of the required jammers in Theorem
\ref{theorem:Scaling} is affected by neither the number of selected
jammers $K$ nor the number of receive antennas $N_r$ (but,
necessarily, $N_t + N_j \le N_r< N_t + KN_j$).

\begin{remark}
Theorem \ref{theorem:Scaling} implies that using single-antenna
jammers is more efficient in terms of the minimum required jammer
scaling than using multi-antenna jammers for target secure DoF. That
is, $S\propto P^{dN_j/2}$ is minimized when $N_j=1$.
\end{remark}

As shown in Theorem \ref{theorem:Delta_loss}, the required number of
jammers increases with the number of the selected jammers to achieve
a given rate loss; if the number of selected jammers increases from
$K$ to $K+1$, the number of required jammers to maintain Bob's rate
loss smaller than $\Delta$ increases from $S$ to
$$\frac{K}{K-1}\left(\frac{K+1}{K}\right)^{\frac{N_tN_j}{2}}S$$
which becomes $S$ when $K$ is sufficiently large.
However, it does not change the scaling law for the given target
secure DoF as shown in Theorem \ref{theorem:Scaling} because the
secure DoF is defined for the asymptotic case that $P\to\infty$.

More receive antennas are likely to be beneficial for jamming signal
alignment because a larger number of receive antennas increase the
subspace dimensions where the jamming signals (or jamming subspaces)
should be aligned, i.e., $N_r-N_t$. However, a larger number of
receive antennas also increase signal space of each jamming signal,
i.e., $N_r$, and thus make it harder to align all jamming signals
together. This counter-effect cancels the benefit of increased
subspace dimensions for jamming signal alignment and makes the
scaling law independent of the number of receive antennas. For
example, suppose that we can increase Bob's receive antennas $N_r$
to $N_r+1$ while maintaining the antenna configuration of $N_t + N_j
\le N_r + 1< N_t + KN_j$.
Although we have increased dimensions $(N_r+1-N_t)$ for jamming
subspace alignment, each jamming subspace becomes $N_j$-dimensional
subspaces in $\mathbb{C}^{N_r+1}$. The alignment of
$N_j$-dimensional subspaces in  $\mathbb{C}^{N_r+1}$ is much more
difficult than the alignment of $N_j$-dimensional subspaces in
$\mathbb{C}^{N_r}$. The scaling law of the required number of
jammers is unchanged due to this difficulty in spite of the
increased dimensions for jamming subspace alignment. These results
and insights are generalized by the following theorem.

\begin{theorem}\label{theorem:Scaling2}
Any wireless communication system can achieve the secure DoF of
$d\in (0,N_t]$ via opportunistic jammer selection with a jammer
scaling law $S \propto P^{d/2}$.
In this case, each jammer should have a single antenna, and
$(N_t+1)$ antennas are enough for Bob's receiver. We can increase
the defensible dimensions of the system as many as we want by
increasing $K$ to make the achievable DoF of Bob directly become the
secure DoF.
\end{theorem}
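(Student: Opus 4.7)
The plan is to specialize Theorem \ref{theorem:Scaling} to the parameter choice $N_j=1$ and $N_r=N_t+1$, and then use the defensible dimensions argument from Section \ref{section:secure_DoF} to handle arbitrary $N_e$. The theorem is essentially a corollary of what has been built up so far, so the work is mainly to check that all the parameter constraints of the preceding analysis remain satisfied.

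First, I would set $N_j=1$ in Theorem \ref{theorem:Scaling}. The scaling law $S\propto P^{dN_j/2}$ then collapses to $S\propto P^{d/2}$, which (as noted in the remark following Theorem \ref{theorem:Scaling}) is the minimum over $N_j\in\mathbb{Z}_{\ge 1}$. This justifies the single-antenna jammer claim. Next, I would verify the antenna configuration constraint $N_t+N_j\le N_r<N_t+KN_j$ under which all of Section V and Section VI was derived: with $N_j=1$ and $N_r=N_t+1$ this becomes $N_t+1\le N_t+1<N_t+K$, which holds for every $K\ge 2$. Since $K\ge 2$ is a standing assumption of the paper, $N_r=N_t+1$ receive antennas at Bob suffice, and Theorem \ref{theorem:Scaling} directly gives Bob a DoF of $d$ with the claimed scaling.

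Second, I would handle Eve and the defensible dimensions. From the derivation of \eqref{eqn:DoF_Eve} in Section \ref{section:secure_DoF}, whenever $KN_j\ge N_e$ the matrix $\sum_{k=1}^K \mathbf{G}_{s_k}\mathbf{G}_{s_k}^\dagger$ is an $N_e\times N_e$ Wishart matrix with $KN_j$ degrees of freedom and therefore full rank almost surely, which forces $\tDoF{\REve}=0$. With $N_j=1$, the condition $KN_j\ge N_e$ reduces to $K\ge N_e$, and since $K$ is a design parameter, it can be chosen to exceed $N_e$ for any given Eve. Increasing $K$ further only enlarges the defensible dimensions $KN_j=K$, so Eve's DoF remains zero for arbitrarily many Eve antennas as long as $K\ge N_e$.

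Combining the two parts, the achievable secure DoF formula \eqref{eqn:DoF_secure} reduces to $\textrm{Secure DoF}=N_t-\tDoF{\RBob^{-}}=d$, which is exactly the target. The only subtlety I anticipate is confirming that the chain of upper bounds leading to $\RBob^{-(3)}$ in Section VI and Lemma \ref{lemma:AM_bound} remain valid under the minimal configuration $(N_j,N_r)=(1,N_t+1)$; this is routine since each step only used $N_r\ge N_t+N_j$ and the Grassmannian dimensions $\Grassmann{N_r,N_j}$, $\Grassmann{N_r,N_r-N_t}$ being non-degenerate, both of which hold here. Hence no new technical obstacle arises, and the theorem follows as a direct specialization plus the defensible-dimension observation.
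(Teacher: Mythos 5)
Your proposal is correct and follows essentially the same route as the paper, which states Theorem \ref{theorem:Scaling2} as a direct consequence of Theorem \ref{theorem:Scaling} specialized to $N_j=1$, $N_r=N_t+1$ (so that $N_t+N_j\le N_r<N_t+KN_j$ holds for any $K\ge 2$), together with the defensible-dimensions/Wishart argument of Section \ref{section:secure_DoF} showing that choosing $K\ge N_e$ forces Eve's DoF to zero. Your explicit verification of the antenna constraints and of the validity of the bounding chain under the minimal configuration matches the paper's intended reasoning.
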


\subsection{OJS with the Partial CSI}\label{section:partial_CSI}
In this subsection, we show that Theorem \ref{theorem:Scaling} and
Theorem \ref{theorem:Scaling2} are still valid with secrecy outage
probability $\epsilon$ for practical partial CSI scenarios that CSI
for Eve's channel is not known to Alice. Since Eve's CSI is hard to
know even when Eve's presence is known, many schemes that work in
the absence of Eve's CSI have been proposed \cite{GLG2008, ZGAH2011,
DLG2011}. The authors in \cite{DLG2011} proposed relay chatting for
half-duplexing two-hop amplify-and-forward (AF) relay communication
systems. In the relay chatting scheme, the best relay forwards the
received signal and the other relays send jamming signals over the
null space of the desired channel via distributed beamforming at
each stage. Compared to the relay chatting scheme, each jammer in
our proposed scheme simply transmits an i.i.d. Gaussian signal; the
selected jammers do not require any CSI or joint transmission.
Moreover, we consider the defensible dimension to prevent Eve's
eavesdropping in multiple antenna configurations.

In Section \ref{section:secure_DoF}, we showed that the jamming
signals make Eve's DoF zero for each channel realization.
In this case, Eve's channel capacity given in \eqref{eqn:R_Eve} is
saturated to
\begin{align}
 \lim_{P\to\infty} \REve =
 \log_2 \left\vert
  \mathbf{I}_{N_e} +
     \frac{N_j}{N_t}\mathbf{G}_0 \mathbf{G}_0 ^\dagger
    \left(
        \sum_{k=1}^K \mathbf{G}_{s_k}\mathbf{G}_{s_k} ^\dagger
    \right)^{-1} \right\vert \label{eqn:R_Eve_satur}.
\end{align}
Since the selected jammers are random to Eve, the distribution of
$\lim_{P\to\infty} \REve$ will be identical with that of a random
variable $R$ defined by
\begin{align}
 R \triangleq  \log_2 \left\vert
  \mathbf{I}_{N_e} +
     \frac{N_j}{N_t}\mathbf{G}_0 \mathbf{G}_0 ^\dagger
    \left(
        \sum_{k=1}^K \mathbf{G}_{k}\mathbf{G}_{k} ^\dagger
    \right)^{-1} \right\vert \label{eqn:X},
\end{align}
which can be numerically found. We can choose a constant rate $r$ to
yield
\begin{align}
 \PR{R \ge r} = \epsilon,
 \label{eqn:secure_outage_prob}
\end{align}
where $\epsilon\in [0,1]$. Then, using Wyner's encoding scheme with
two rates $(\RBob, r)$ instead of $(\RBob, \REve)$, Bob can achieve
the secure DoF
\begin{align}
  \mathbb{E}\left\{ \DoF{\left[\RBob - r\right]^+} \right\}
  =\mathbb{E}\left\{ \DoF{\RBob} \right\}\NN
\end{align}
with secrecy outage probability $\epsilon$, which is the same as the
achievable secure DoF with the knowledge of $\REve$.

Obviously, in \eqref{eqn:secure_outage_prob}, the smaller $\epsilon$
requires the larger $r$, but it is independent of $P$. Therefore, we
can almost surely obtain the same target secure DoF by choosing
sufficiently large $r$ which makes $\epsilon\approx 0$. Note that
this result comes from the independency between the jamming signals
and Eve's CSI.

\section{Numerical Results}

In this section, we evaluate our proposed opportunistic jammer
selection schemes.
Fig. \ref{fig:achievable_rate} shows that Bob can increase the
achievable rate via jammer selection while maintaining the same
average capacity of Eve's channel. The numbers of antennas at Alice,
Bob, jammers, and Eve are assumed by $(N_t, N_j, N_r, N_e) = (2, 2,
4, 4)$, respectively, and Bob selects two jammers  (i.e., $K=2$) in
the jammer group. In this case, the number of Eve's receive antennas
does not exceed the defensible dimensions, i.e., $N_e\le KN_j$, so
that Eve obtains zero DoF. We also consider the capacity maximizing
jammer selection scheme at Bob for comparison. As shown in Fig.
\ref{fig:achievable_rate}, the channel capacities of both Bob and
Eve are saturated in the high SNR region because the number of
jammers is finite.
However, at a fixed SNR, the achievable rate of Bob increases with
the number of candidate jammers, while Eve's channel capacity
remains unchanged.
The gap between the capacities of Bob and Eve increases with the
number of candidate jammers, and results in the secure DoF of $N_t$
when $S\to\infty$. As the number of jammers increases, Bob can
reduce the negative effects of the jamming signals, and the channel
capacity with jammers by the capacity maximizing jammer selection
will go to that without jamming signals. Since Eve's capacity with
jammers is saturated in the high SNR region, Bob can obtain the
secure DoF without Eve's CSI as described in Section
\ref{section:partial_CSI}.
In the case that $(N_t, N_j, N_r, N_e) = (2, 2, 4, 4)$ and $K=2$,
Fig. \ref{fig:outage_probability} shows the secrecy outage
probability (i.e., $\epsilon$) when the Eve's achievable rate is
treated as a constant (i.e., $r$). This figure shows that the
secrecy outage probability decreases if the constant rate of Eve
increases.

The achievable secrecy rates are plotted in Fig. \ref{fig:scaling1}
when Bob selects two jammers in the jammer group with a scaled
number of jammers. The number of antennas at each node is $(N_t,
N_j, N_r, N_e) = (1, 2, 3, 3)$. In this case, the number of
defensible dimensions of the security system is $KN_j=4$ and hence
the Eve's DoF becomes zero. In Fig. \ref{fig:scaling1}, we consider
the optimal jammer selection scheme at Alice with global CSI to
maximize the secrecy rate, while in our proposed jammer selection
scheme Bob selects jammers with only its own CSI.
As a referential upper bound, we also consider the scheme of
\cite[p.189]{BB2011}. In the referential scheme, Alice with
($N_t+KN_j$) antennas steers beams with perfect CSIT and sends
artificial noise along with information-bearing messages without any
help of jammers. On the other hand, Bob and Eve have $N_r$ and $N_e$
antennas, respectively, as in our system model.
As shown in Theorem \ref{theorem:Scaling}, Bob can obtain DoF of one
when the number of jammers is scaled by $S \propto {P^{dN_j/2}} =
P$.
We consider two jammer scalings $S=P$ and $S=0.3P$. In both cases,
Bob obtains DoF of one which directly becomes secure DoF. In Fig.
\ref{fig:scaling2}, we consider two scenarios $S=P$ and $S=P^{0.5}$
in the same configuration. As predicted in Theorem
\ref{theorem:Scaling}, Bob obtains secure DoF of one and a half when
$S=P$ and $S=P^{0.5}$, respectively.
Note that the referential scheme \cite[p.189]{BB2011} obtains much
higher secrecy rate because Alice not only has more transmit
antennas (i.e., $N_t+KN_j$ antennas) but also exploits perfect CSIT.

In Fig. \ref{fig:scaling3}, we consider the antenna configuration
$(N_t, N_j, N_r, N_e) = (2, 1, 3, 2)$ and plot the achievable
secrecy rates when Bob selects two jammers in the jammer group with
a scaled numbers of jammers $S\propto P$ and $S\propto P^{0.5}$,
respectively. As predicted in Theorem 2, Bob obtains secure DoF two
and one when $S=P$ and $S=P^{0.5}$, respectively.

\section{Conclusions}
In this paper, we proposed the opportunistic jammer selection
schemes to achieve the secure DoF in a secure communication system
aided by jammers. For the opportunistic jammer selection, we
proposed two selection criteria -- the minimum DoF loss jammer
selection and the subspace-based jammer selection. We proved that
the secure DoF can be obtained by aligning jamming signals in a
small dimensional subspace at Bob's receiver through the
opportunistic jammer selection.
From the geometric interpretation, we found the required jammer
scaling laws to obtain target secure DoF at Bob's receiver.


\clearpage \linespread{1.8777}


\clearpage

\begin{figure}[!t]
\centering
  \includegraphics[width=0.8777\columnwidth]{}\\
  \caption{System model.}
  \label{fig:system_model}
\end{figure}

\begin{figure}[!t]
\centering
  \includegraphics[width=0.8\columnwidth]{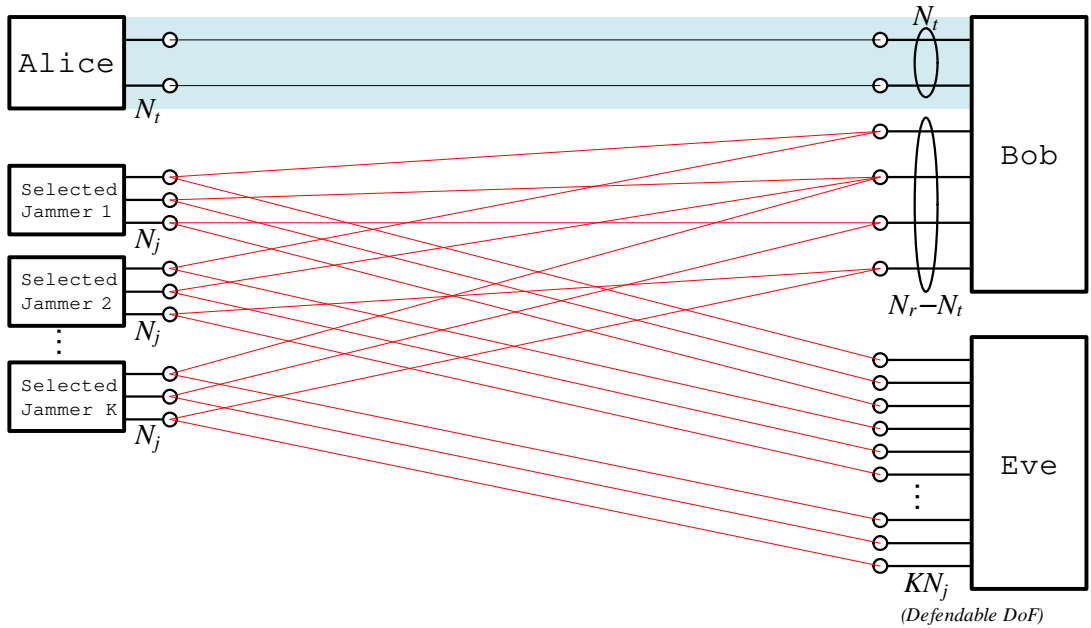}\\
  \caption{The basic concept of opportunistic jammer selection. The
  jamming signals from the selected jammers are aligned at Bob but
  not at Eve.}
  \label{fig:OJS_concept}
\end{figure}

\begin{figure}[!t]
\centering
  \includegraphics[width=0.45\columnwidth]{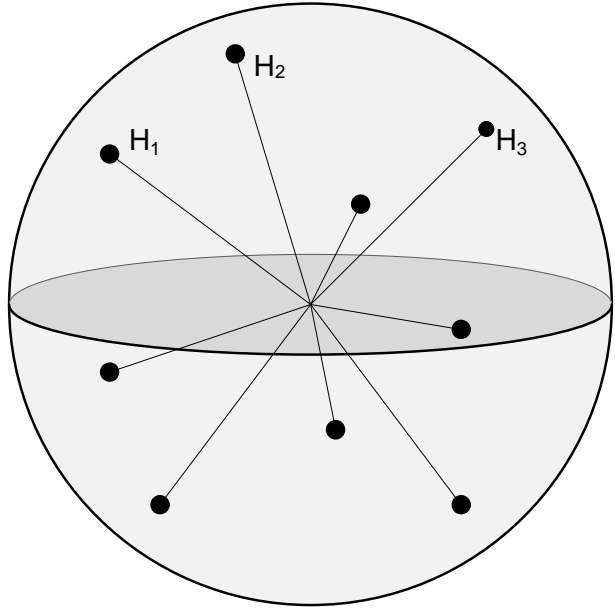}\\
  \caption{There are total $S$ points on the sphere. Each point
  represents the subspace formed by each jammer.}
  \label{fig:jamming_subspaces}
\end{figure}

\begin{figure}[!t]
\centering
\subfigure[Packing radius]
    {\includegraphics[width=.4\textwidth]{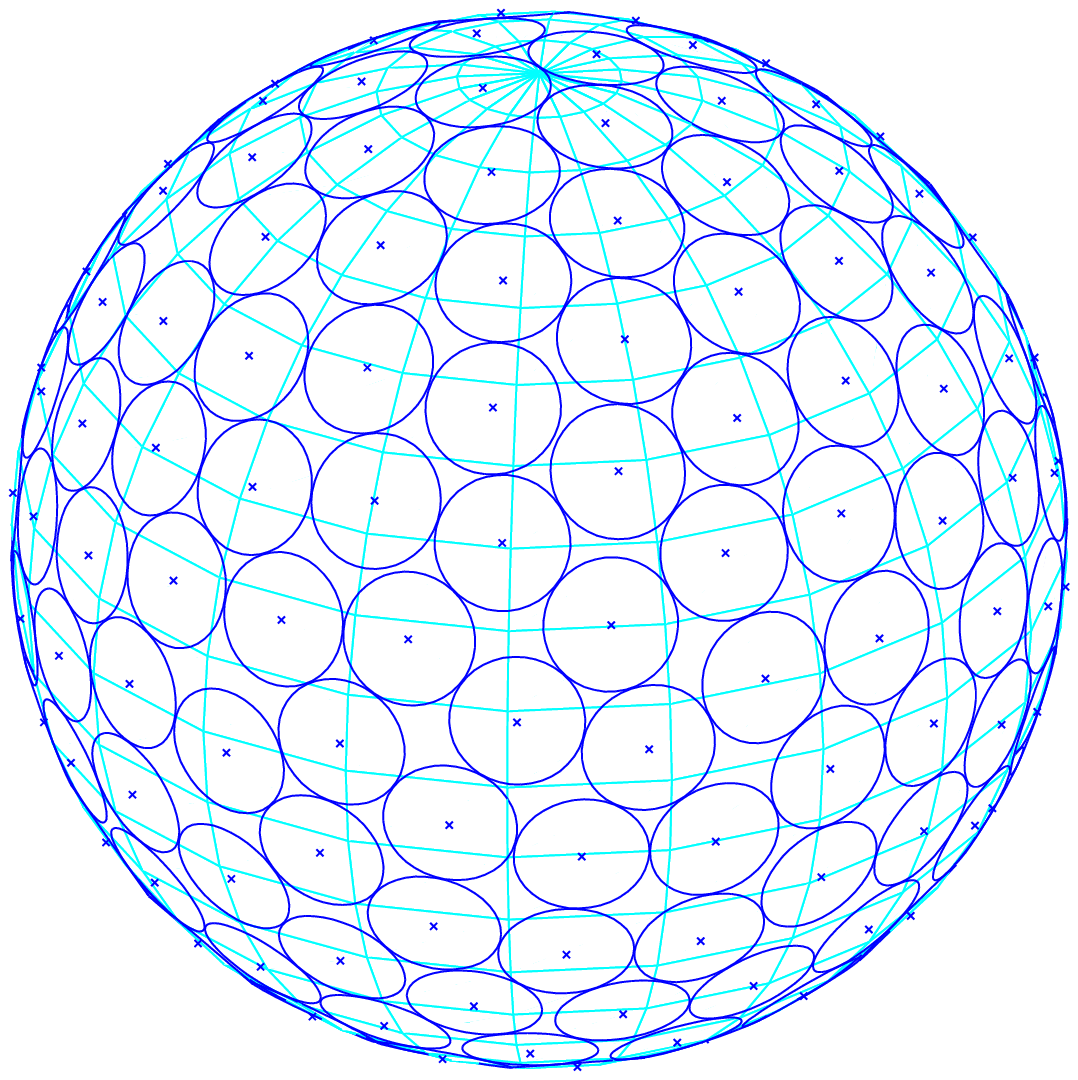}
    \label{fig:packing_radius}}
\subfigure[Covering radius]
    {\includegraphics[width=.4\textwidth]{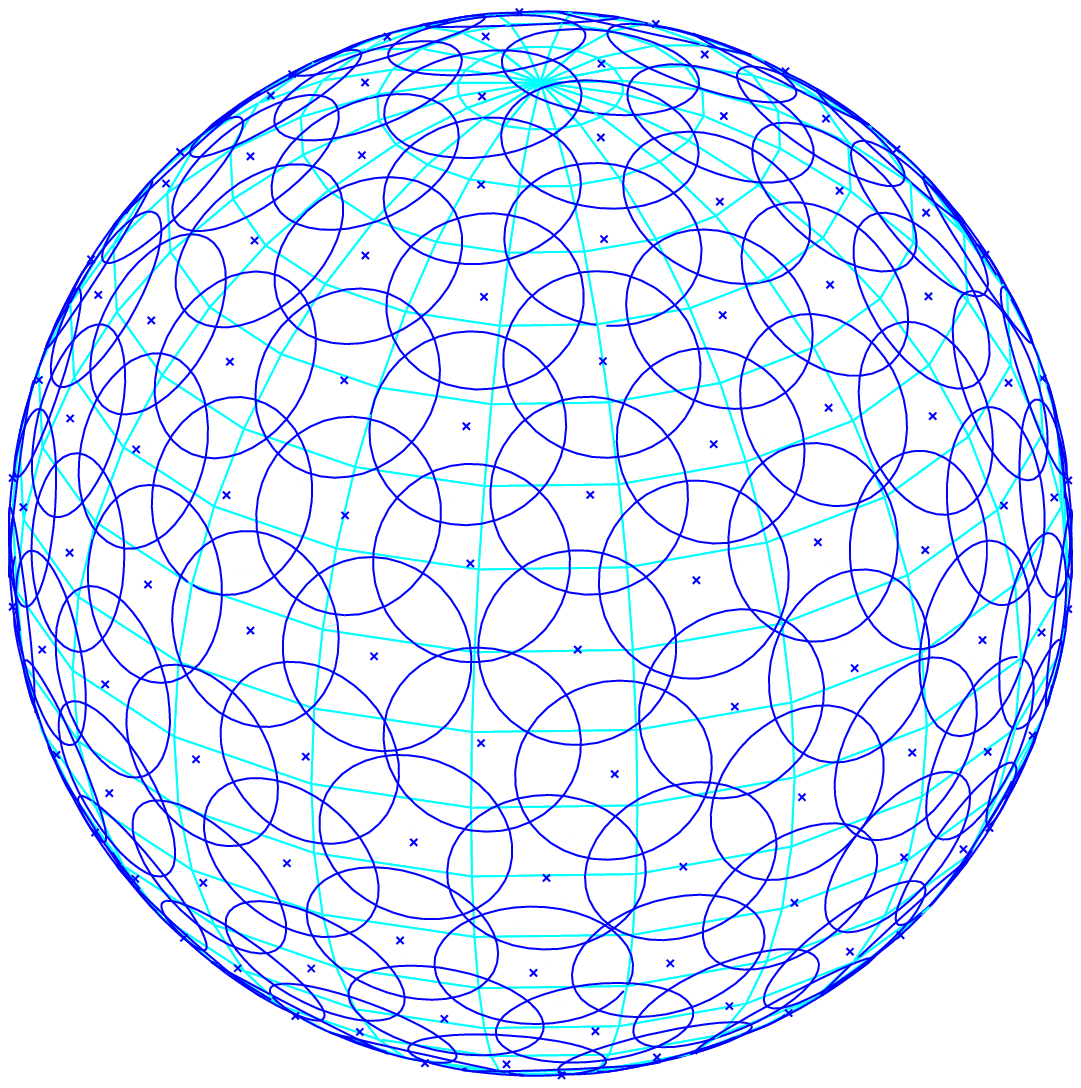}
    \label{fig:covering_radius}}
\caption{Graphical representation of the packing radius and the
covering radius on the sphere. The metric balls at $M$ points pack
or cover the sphere. } \label{fig:packing_covering}
\end{figure}

\begin{figure}[!t]
\centering
  \includegraphics[width=.5\columnwidth]{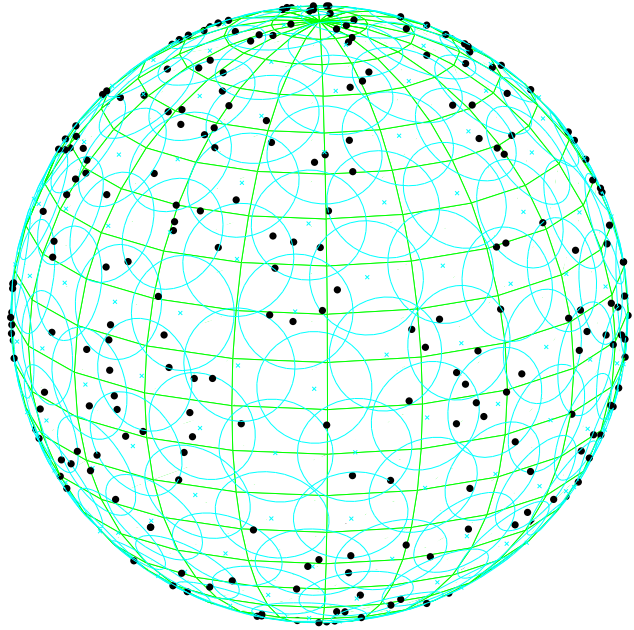}\\
  \caption{There are $S=(K-1)M+1$ jamming subspaces (points) and $M$ metric balls covering the sphere.
  There exist a metric ball that contains $K$ jamming subspaces. }
  \label{fig:jammer_covering}
\end{figure}

\begin{figure}[!t]
\centering
  \includegraphics[width=.6\columnwidth]{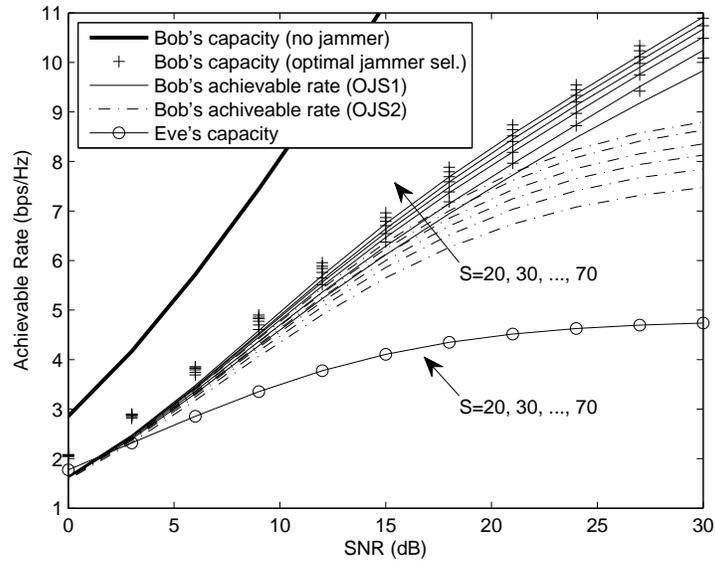}\\
  \caption{Achievable rates of Bob when the number of jammers is fixed.
  $(N_t, N_j, N_r, N_e) = (2, 2, 4, 4)$ and $K=2$.}
  \label{fig:achievable_rate}
\end{figure}

\begin{figure}[!h]
\centering
  \includegraphics[width=.6\columnwidth]
    {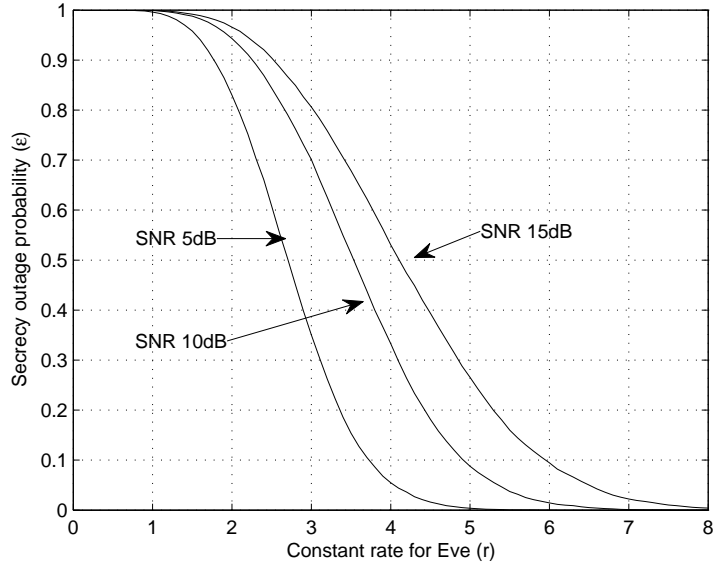}\\
  \caption{Secrecy outage probability $\epsilon$ when the Eve's achievable rate is
treated as a constant $r$. $(N_t, N_j, N_r, N_e) = (2, 2, 4, 4)$ and
$K=2$.}
  \label{fig:outage_probability}
\end{figure}

\begin{figure}[!t]
\centering
  \includegraphics[width=.6\columnwidth]{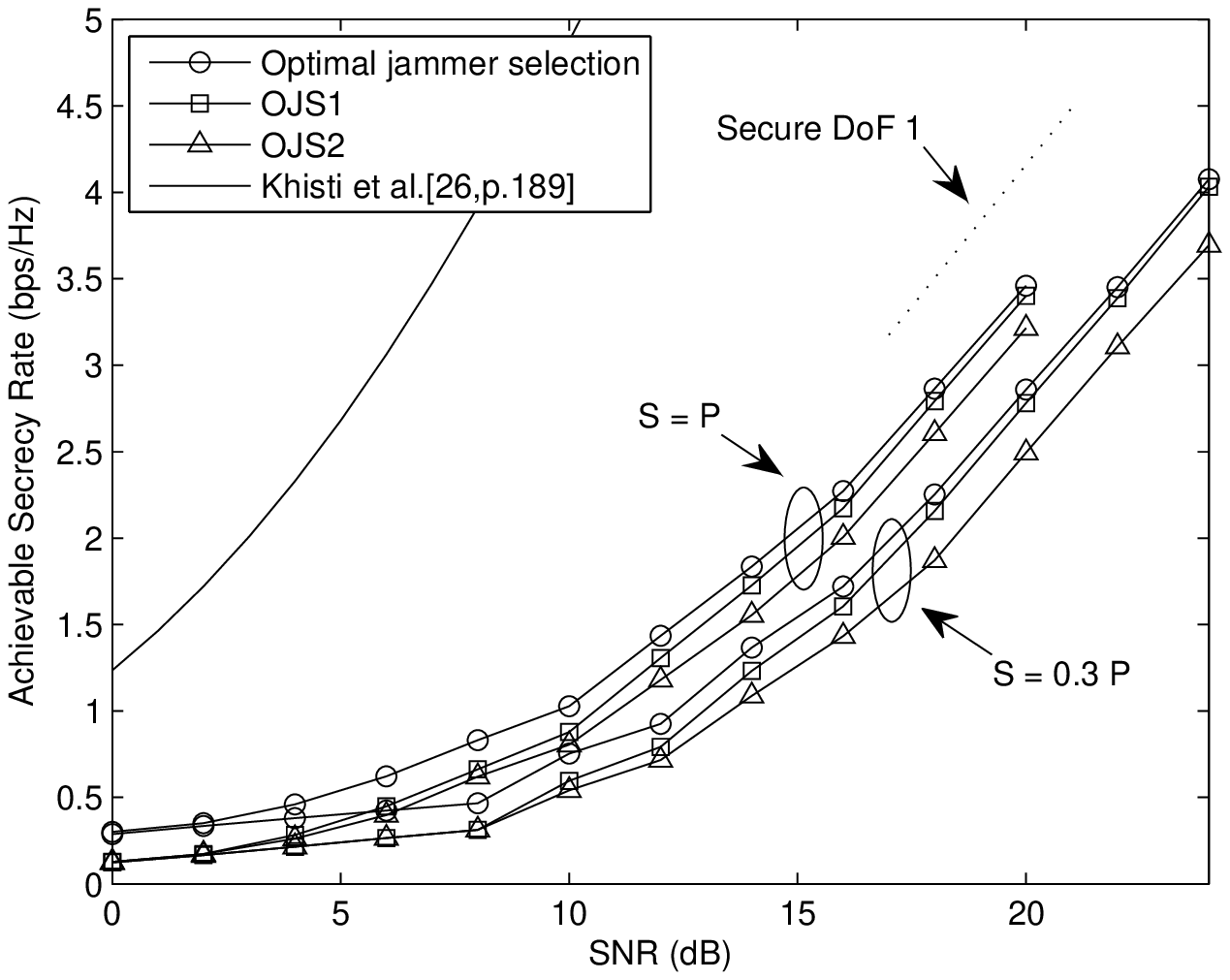}\\
  \caption{The achievable secrecy rates for various jammer selection schemes when $(N_t, N_j, N_r,
    N_e) = (1, 2, 3, 3)$. Bob selects two jammers in the jammer groups of
     $S=P$ and $S=0.3P$ jammers, respectively.}
  \label{fig:scaling1}
\end{figure}

\begin{figure}[!t]
\centering
  \includegraphics[width=.6\columnwidth]{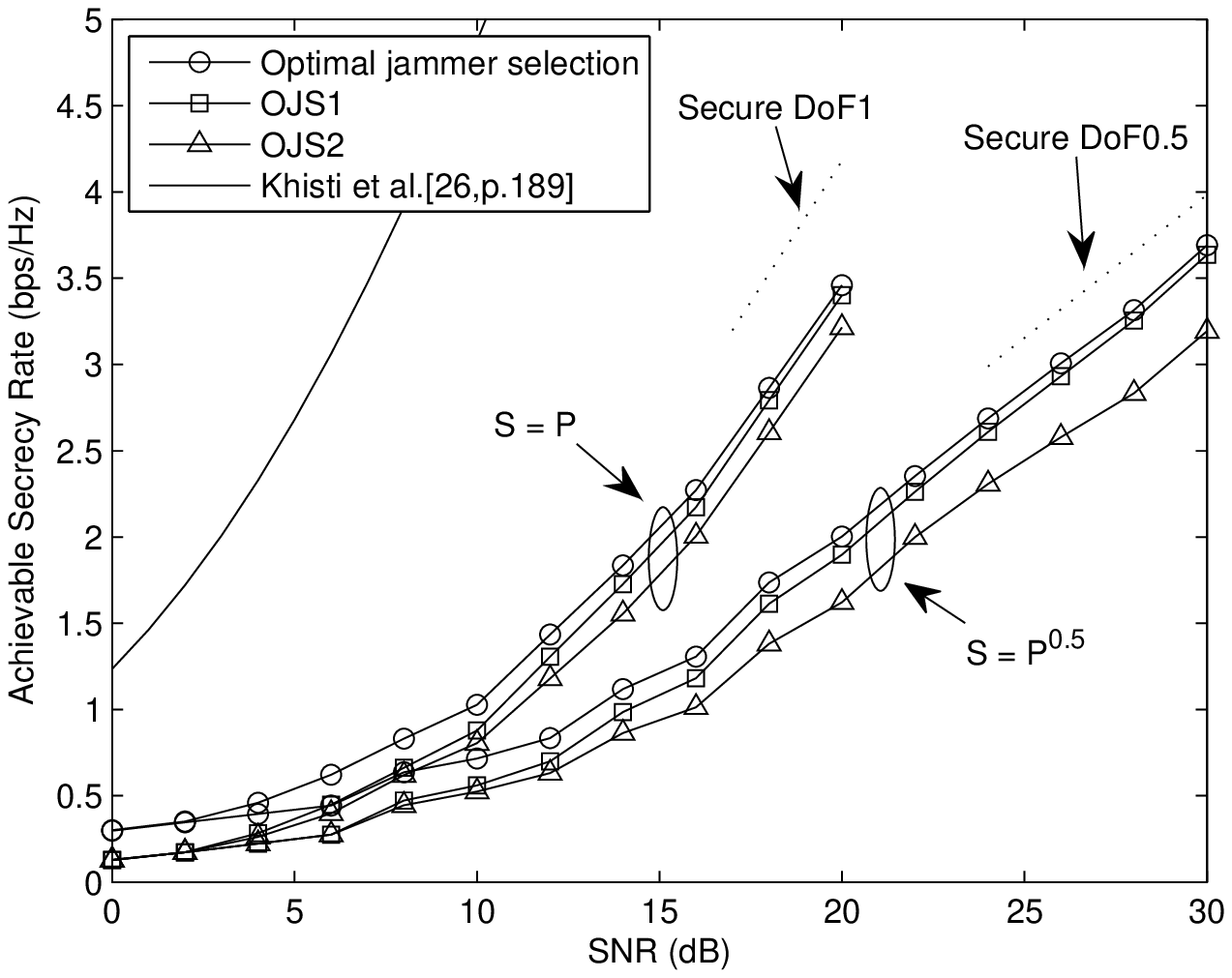}\\
  \caption{The achievable secrecy rates for various jammer selection schemes when $(N_t, N_j, N_r,
    N_e) = (1, 2, 3, 3)$. Bob selects two jammers in the jammer groups of
    $S=P$ and  $S=P^{0.5}$ jammers, respectively.}
  \label{fig:scaling2}
\end{figure}

\begin{figure}[!h]
\centering
  \includegraphics[width=.6\columnwidth]
    {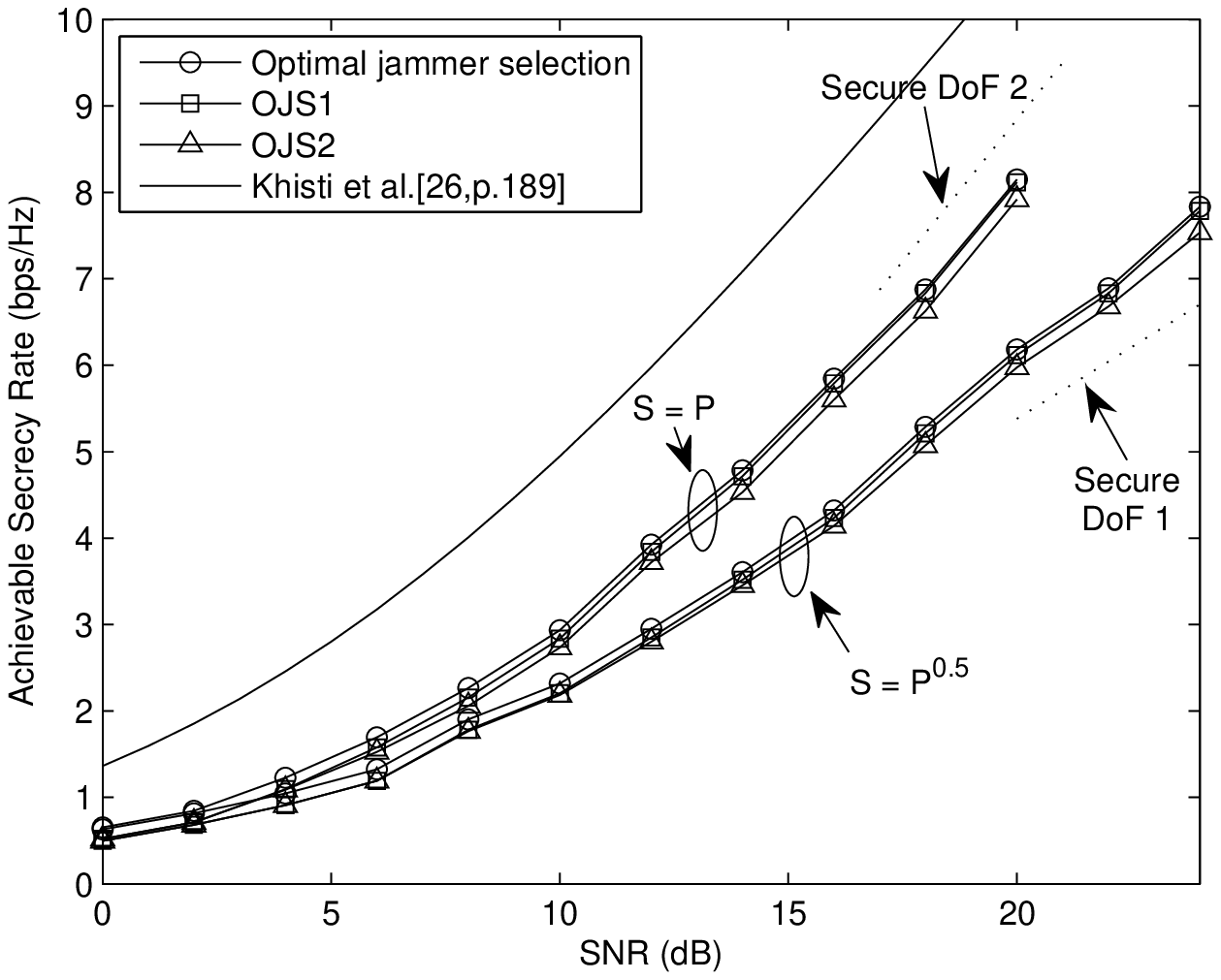}\\
  \caption{The achievable secrecy rates for various jammer selection schemes when
  $(N_t, N_j, N_r, N_e) = (2, 1, 3, 2)$. Bob selects two jammers in the jammer
  groups of $S=P$ and $S=P^{0.5}$ jammers, respectively.}
  \label{fig:scaling3}
\end{figure}

\end{document}